\begin{document}

\begin{titlepage}
\begin{flushright}
UWThPh 2025-13
\end{flushright}

\vskip 2cm
\begin{center}
{\Large{\textbf{Classifying Isolated Symplectic Singularities \\[5pt] via 3d $\mathcal{N}=4$ Coulomb Branches}}}

\vspace{15mm}
{{\large 
Antoine Bourget${}^a$, Quentin Lamouret${}^a$

Sinan Moura Soys\"uren${}^{b,c}$, 
Marcus Sperling${}^b$}} 
\\[5mm]
\noindent {${}^a$\em  Institut de physique théorique, Université Paris-Saclay, CEA, CNRS, \\ 91191, Gif-sur-Yvette, France}\\
Email: {{\tt antoine.bourget@ipht.fr}}, \\ {{\tt quentin.lamouret@ipht.fr}}
\\[5mm]
\noindent {${}^b$\em University of Vienna, Faculty of Physics, Mathematical Physics Group,\\
Boltzmanngasse 5, 1090 Vienna, Austria,}\\
Email: {{\tt sinan.moura.soysueren@univie.ac.at}},\\ {\tt{marcus.sperling@univie.ac.at}}
\\[5mm]
\noindent {${}^c$\em University of Vienna, Vienna Doctoral School in Physics, \\ Boltzmanngasse 5, 1090 Vienna, Austria.}
\\[15mm]
\end{center}

\begin{abstract}
\begin{adjustwidth}{2cm}{2cm} % Increase left and right margins by 2cm
Based on the Decay and Fission Conjecture, we provide a classification of unitary quivers whose 3d $\mathcal{N}=4$ Coulomb branches exhibit isolated singularities. This yields the complete list of isolated conical symplectic singularities that can arise in this way.
In the process, we identify three new families of stable quivers: two giving rise to previously unknown isolated symplectic singularities, and one offering a novel realization of a known family.
\end{adjustwidth}
\end{abstract}

\end{titlepage}

\tableofcontents

\section{Introduction}

\subsection*{Mathematical Motivations } 
Symplectic singularities were defined by Beauville in 1999 \cite{Beauville:2000} to capture the notion of singular symplectic varieties. The simplest examples of symplectic singularities are \emph{isolated}, meaning that the singular locus is a point, and \emph{conical}, meaning there is a $\mathbb{C}^{\ast}$ action compatible with the symplectic structure. We henceforth call them \emph{Isolated Conical Symplectic Singularities}, ICSSs for short.  
Well-known ICSSs include the Kleinian/Du Val surface singularities $\mathbb{C}^2\slash \Gamma_{ADE}$ and the closures of minimal nilpotent orbits $\overline{\mathcal{O}_{\min}} (\mathfrak{g})$ of semi-simple Lie algebras $\mathfrak{g}$. These encode all minimal transverse slices that can appear for nilpotent orbits of semi-simple Lie algebras \cite{kraft1980minimal,Kraft1982,fu2017generic}. 

Beauville \cite{Beauville:2000} raised the following question: What are more examples of ICSSs with trivial local fundamental group, beyond closures of minimal nilpotent orbits? This remained open for about 20 years, until recently.
In \cite{bellamy2023new} the authors identified a new such family $\mathcal{Y}(\ell)$ as singularities in the blowup of the quotient of $\mathbb{C}^{4}$ by the dihedral group of order 2$\ell$. Shortly after, \cite{namikawa2023remark} provided the construction of what we call $\overline{h}_{n,\sigma}$ singularities by using (toric) hyper-Kähler quotients.

Another possible source of examples are 3d $\mathcal{N}=4$ Coulomb branches \cite{Nakajima:2015txa,Braverman:2016wma}, which under certain assumptions have symplectic singularities \cite{Weekes:2020rgb,Bellamy:2023lqf}. Here, we focus on quiver gauge theories with unitary gauge groups, generalized by adding non-simply laced edges \cite{Nakajima:2019olw}. The isolated character of the singularity can be detected following the conjectural \emph{Decay and Fission algorithm} \cite{Bourget:2023dkj,Bourget:2024mgn}, which computes the stratification of such a Coulomb branch into partially ordered symplectic leaves (this poset is encoded in a ``Hasse diagram'', see Figure~\ref{fig:intro}). Based on this conjecture, we provide a \emph{classification of isolated conical symplectic singularities realized as 3d $\mathcal{N}{=}4$ Coulomb branches of quiver gauge theories with unitary gauge groups}. Remarkably, this encompasses all the previously known ICSSs --- except for $D$ and $E$ type surface singularities --- and adds two new infinite families to the list, see below. We also get a new Coulomb branch identification, see \eqref{eq:CBequality}.

Our classification is also of physical interest, as discussed in the next paragraph and illustrated in Figure~\ref{fig:intro}. This, however, can be skipped by readers interested only in the mathematical content.

\begin{figure}[t]
\begin{center}
\scalebox{.9}{\begin{tikzpicture}
\node[draw,align=center,color=purple,text width=3cm] (1) at (-4,3.2) {String Theory / \\ Brane System};
\node[draw,align=center,color=purple] (2) at (4,3.2) {SCFT};
\node[draw,align=center,text width=2cm,color=myGreen] (3) at (-4,0) {Magnetic Quiver};
\node[draw,align=center,text width=2cm] (4) at (4,0) {Conical Symplectic Singularity};
\node[draw,align=center,text width=2cm] (5) at (8,3) {Hilbert Series};
\node[draw,align=center,text width=2cm] (6) at (8,-3) {Hasse Diagram};
\draw[-{Latex[length=3.5mm]},purple] (2) -- node[above, sloped] {\footnotesize Higgs Branch} (4);
\draw[-{Latex[length=3.5mm]}] (4) -- node[above, sloped] {\footnotesize GrDim} (5);
\draw[-{Latex[length=3.5mm]},myGreen] (3) -- node[above, sloped] {\footnotesize Monopole Formula} (5);
\draw[-{Latex[length=3.5mm]}] (4) -- node[above, sloped] {\footnotesize Symp. Leaves} (6);
\draw[-{Latex[length=3.5mm]},myGreen] (3) -- node[above, sloped] {\footnotesize Decay and Fission} (6);
\draw[-{Latex[length=3.5mm]},purple] (1) -- node[above, sloped] {\footnotesize Intersections} (3);
\draw[-{Latex[length=3.5mm]},myGreen] (3) -- node[above, sloped] {\footnotesize $\mathcal{M}_C$} (4);
\draw[-{Latex[length=3.5mm]},purple] (1) -- node[above, sloped] {\footnotesize $\alpha ' \rightarrow 0$} (2);
\end{tikzpicture}}
\end{center}
\caption{\textbf{Black}: Symplectic singularities can be partially characterized by the (Hasse diagram of the) poset of symplectic leaves, or the Hilbert series of graded dimensions.  \textbf{\textcolor{myGreen}{Green}}: For symplectic singularities that are realized as 3d $\mathcal{N}=4$ Coulomb branches $\mathcal{M}_C$, these objects can be computed from physics-inspired tools such as the Monopole Formula and the Decay and Fission algorithm. \textbf{\textcolor{purple}{Purple}}: Symplectic singularities are realized in string theory through two mechanisms: Higgs branches of superconformal field theories (SCFTs), which extend beyond hyper-Kähler quotients, and magnetic quivers derived from brane intersections.
Remark: In general, it is not possible to go against the arrows.}
\label{fig:intro}
\end{figure}
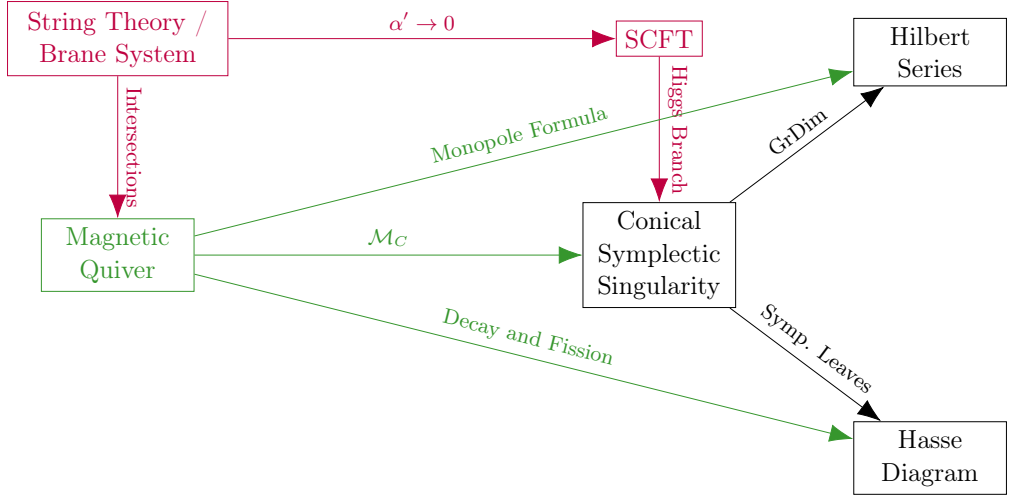

\subsection*{Physics Motivations} 
Superconformal quantum field theories (SCFTs) with 8 supercharges in spacetime dimensions $d=3,4,5,6$ typically exhibit a moduli space $\mathcal{M}$ of supersymmetric vacuum solutions. A distinguished branch $\mathcal{M}_H \subset \mathcal{M}$, called the Higgs branch, can be defined as the locus left invariant by all of the superconformal algebra, except for the $R$-symmetry factor $\mathfrak{su}(2)_R$. This is a conical symplectic singularity (CSS). Despite the difficulty of rigorously defining the quantum field theory framework (QFT), the Higgs branch is an object that is mathematically well-defined \cite{Hitchin:1986ea,Nakajima:1994nid,Antoniadis:1996ra}, and which can be used as a rich invariant for the SCFT --- indeed SCFTs can be defined using very diverse languages. The geometric properties of $\mathcal{M}_H$ are in one-to-one correspondence with features of the physical theory. For instance, isometries correspond to flavor symmetries and the finite stratification into partially ordered symplectic leaves corresponds to the Higgs mechanism \cite{Bourget:2019aer}. From that perspective, an ICSS corresponds to an elementary Higgs mechanism. More generally, ICSSs are viewed as elementary building blocks of Higgs branches, which motivates our effort to gather as many examples as possible.

In $d=3$ spacetime dimensions, another branch is also a CSS, the Coulomb branch $\mathcal{M}_C \subset \mathcal{M}$ \cite{Nakajima:2015txa,Braverman:2016pwk,Braverman:2016wma}. The coexistence of these two branches is the basis of 3d $\mathcal{N}=4$ mirror symmetry \cite{Intriligator:1996ex}, which exchanges the Higgs and Coulomb branches of two dual theories. Many insights into the Coulomb branch of Lagrangian theories have been developed in recent years (see \cite{Cremonesi:2013lqa,Bullimore:2015lsa} and subsequent works), building on the realization that monopole operators \cite{Borokhov:2002ib,Borokhov:2002cg} are a suitable starting point for the quantum behavior of these spaces. Following this, the 3d $\mathcal{N}=4$ Coulomb branch has been appreciated as a new construction method for symplectic singularities. In physics, this is particularly prominent in the magnetic quiver program (see \cite{Cabrera:2018jxt,Cabrera:2019izd,Bourget:2021siw} and subsequent works), which uses this new construction to study quantum Higgs branches of higher dimensional theories with 8 supercharges, as discussed in the previous paragraph.  This often relies on string theory which provides constructions of a vast class of SCFTs in dimension 3 to 6, using e.g. brane systems. In many cases, these allow one to derive magnetic quivers based on intersection numbers of branes in the magnetic phase, see for example \cite{Cabrera:2018jxt,Cabrera:2019izd}. Such magnetic quivers are, in the simplest cases (in particular in the absence of certain orientifold planes), so-called \emph{unitary quivers}. These are 3d $\mathcal{N}=4$ quiver gauge theories with gauge group $G=\prod_{i}\textrm{U}(n_{i})$, and generalizations thereof with so-called non-simply laced edges \cite{Cremonesi:2014xha}.

Here, we focus on symplectic singularities realized as the Coulomb branch of such unitary quivers. Within this class of theories, new isolated symplectic singularities have recently been found: 
\begin{itemize}
    \item The $\mathcal{Y}(\ell)$ singularities \cite{bellamy2023new} have been given a quiver realization in \cite{Bourget:2022tmw}.
    \item The $\overline{h}_{n,\sigma}$ singularities \cite{namikawa2023remark} have been realized as quivers in \cite{Bourget:2021siw,Bourget:2024asp}.
    \item Further singularities, called $\mathcal{J}_{2,3}$ and $\mathcal{J}_{3,3}$, have been found via unitary quivers \cite{Bourget:2022tmw}.
    \item Hyper-K\"ahler quotient singularities $h_{n,\delta,\sigma}$ by discrete cyclic groups have been realized as quivers in \cite{Bourget:2024asp} (building on earlier special cases of \cite{Bourget:2021siw}).
    \item Another (quaternionic) 4-dimensional singularity $gb_2$ has been conjectured in \cite{Bourget:2023dkj,Bourget:2024mgn}, which is extended here to a whole family $gb_n$ of new ICSSs.
\end{itemize}
The purpose of this work is to complete the classification of such isolated symplectic singularities, realized as Coulomb branches of 3d $\mathcal{N}=4$ quiver theories, by using the Decay and Fission algorithm \cite{Bourget:2023dkj,Bourget:2024mgn}.

\subsection*{Summary of Results and Organization of the Paper}
All ICSSs discussed above, arising from different constructions, remarkably show up in our classification, in a \emph{completely uniform language}, based on the Decay and Fission algorithm (Conjecture~\ref{conj:Decay-Fission}). Our first aim is to \emph{establish a framework} in which this conjecture can be precisely stated; this is the content of Section~\ref{sec:def}. Then our main result, stated and proven in Section~\ref{sec:proofs}, is: 
\vspace{4pt}
\begin{ftheo}\label{thm:final_result}
Assuming \textbf{Conjecture~\ref{conj:Decay-Fission}} holds, Table~\ref{tab:results} provides the complete list of unitary quivers (as defined in Definition~\ref{def:quiver}) whose Coulomb branches are ICSSs.
\end{ftheo}
\begin{table}
\centering
\hspace*{-1.0cm}
\begin{NiceTabular}{cccc}
\midrule[2pt]
& Geometry & Quiver & Condition \\ \midrule[2pt]
\Block{3-1}<\rotate>{\large{\textsc{Abelian Quivers}}} 
& $A_{N-1}$ & \begin{tikzpicture}[baseline=0][baseline=0]
                \tikzset{node distance=2cm};
                \node (1a) at (0,0) [gauge,label={[align=center]below:\small{$1$}}] {};
                \node (1b) [gauge,right of=1a,label={[align=center]below:\small{$1$}}] {};
                \draw[line width=2pt,darkgray, decoration={markings, mark=at position 0.6 with {\arrow[scale=0.7]{>}}},postaction={decorate}] (1a)to node[midway, above, black] {\small{$\ell$, $N$}} (1b);
                \end{tikzpicture} & \small{{$\!\begin{aligned}&N\geq2, \, \ell\geq1\\&\textrm{$N$ copies of $\ell$-edge}\end{aligned}$}} \\ \cmidrule{2-4}
& $h_{n , \delta , \sigma}$ & \raisebox{-.5\height}{\begin{tikzpicture}[xscale=1.5]
        \node (0) at (0,0) [gauge,label={[align=center]below:\small{$1$}}] {};
        \node (1) at (1,0) [gauge,label={[align=center]below:\small{$1$}}] {};
        \node (2) at (1.3,0) {$\cdots$};
        \node (3) at (1.6,0) [gauge,label={[align=center]below:\small{$1$}}] {};
        \node (4) at (2.6,0) [gauge,label={[align=center]below:\small{$1$}}] {};
        \draw[line width=2pt,darkgray, decoration={markings, mark=at position 0.2 with {\arrow[scale=0.7]{>}}, mark=at position 0.9 with {\arrow[scale=0.7]{<}}}, postaction={decorate}] (0) to node[midway, above, black] {\small{$(\ell_{1},k_{1})$}} (1);
        \draw[line width=2pt,darkgray, decoration={markings, mark=at position 0.2 with {\arrow[scale=0.7]{>}}, mark=at position 0.9 with {\arrow[scale=0.7]{<}}}, postaction={decorate}] (3) to node[midway, above, black] {\small{$(\ell_{n},k_{n})$}} (4);
        \end{tikzpicture}} & \small{{$\!\begin{aligned}& n \geq 2 \\ &\delta\equiv\textrm{gcd}\left(\ell_{1},k_{n}\right)>1\\&\textrm{gcd}\left(\ell_{i},k_{j}\right)=1 \; \textrm{for all other} \; 1\leq i\leq j\leq n\\&\textrm{Charge vector} \; \sigma=\left(\sigma_{1},...,\sigma_{n}\right)\in\left(\mathbb{Z}^{*}_{\delta}\right)^{n}\\&\textrm{with}\; \sigma_{i}/\sigma_{i-1}=-\ell_{i}/k_{i-1}\;\textrm{mod}\;\delta\;\forall\;2\leq i\leq n \end{aligned}$}} \\  \cmidrule{2-4}  
&  $\overline{h}_{n , \sigma}$      &  \raisebox{-.5\height}{\begin{tikzpicture} 
    \tikzset{node distance=2cm};
    \node (1a) at (0,0) [gauge,label={[align=center]above:\small{$1$}}] {};
    \node (1b) [gauge,right of=1a,label={[align=center]above:\small{$1$}}] {};
    \node (1c) at (-1,-1.5) [gauge,label={[align=center]left:\small{$1$}}] {};
    \node (1d) at (3,-1.5) [gauge,label={[align=center]right:\small{$1$}}] {};
    \node (1e) at (0,-3) [gauge,label={[align=center]below:\small{$1$}}] {};
    \node (1f) at (2,-3) [gauge,label={[align=center]below:\small{$1$}}] {};
    \node (End1) at (0.5,-3) {};
    \node (End2) at (1.5,-3) {};
    \draw[line width=2pt,darkgray, decoration={markings, mark=at position 0.2 with {\arrow[scale=0.7]{>}}, mark=at position 0.9 with {\arrow[scale=0.7]{<}}}, postaction={decorate}] (1a)to node[midway,above,black,rotate=0] {\small{$(k_{1},\ell_{1})$}} (1b);
    \draw[line width=2pt,darkgray, decoration={markings, mark=at position 0.2 with {\arrow[scale=0.7]{>}}, mark=at position 0.9 with {\arrow[scale=0.7]{<}}}, postaction={decorate}] (1b)to node[midway,above,black,rotate=-55] {\small{$(k_{2},\ell_{2})$}} (1d);
    \draw[line width=2pt,darkgray, decoration={markings, mark=at position 0.2 with {\arrow[scale=0.7]{>}}, mark=at position 0.9 with {\arrow[scale=0.7]{<}}}, postaction={decorate}] (1a)to node[midway,above,black,rotate=58] {\small{$(k_{n+1},\ell_{n+1})$}} (1c);
    \draw[line width=2pt,darkgray, decoration={markings, mark=at position 0.2 with {\arrow[scale=0.7]{>}}, mark=at position 0.9 with {\arrow[scale=0.7]{<}}}, postaction={decorate}] (1c)to node[midway,below,black,rotate=-55] {\small{$(\ell_{n},k_{n})$}} (1e);
    \draw[line width=2pt,darkgray, decoration={markings, mark=at position 0.2 with {\arrow[scale=0.7]{>}}, mark=at position 0.9 with {\arrow[scale=0.7]{<}}}, postaction={decorate}] (1d)to node[midway,below,black,rotate=55] {\small{$(\ell_{3},k_{3})$}} (1f);
    \draw[line width=2pt,darkgray, decoration={markings, mark=at position 0.9 with {\arrow[scale=0.7]{>}}}, postaction={decorate}] (1e)to (End1);
    \draw[line width=2pt,darkgray, decoration={markings, mark=at position 0.9 with {\arrow[scale=0.7]{>}}}, postaction={decorate}] (1f)to (End2);
    \draw[line width=1pt,darkgray,dotted] (End1)to (End2);
    \end{tikzpicture}} & \small{$\!{\begin{aligned}& n\geq3 \\&\prod_i k_i = \prod_j \ell_j \; \textrm{(length-function)}\\&\textrm{gcd}(\ell_i , k_j)=1 \; \forall\; (i,j)\in\mathbb{Z}_{n+1}^{2}\; \\& \textrm{such that}\;\; i-j \not\equiv 1,2 \;\textrm{mod}\; n+1\\&\textrm{Charge vector}\;\sigma=\left(\sigma_{1},...,\sigma_{n+1}\right)\in\mathbb{Z}^{n+1}\\&\textrm{with}\;\sigma_{i}=\textrm{gcd}\left(\ell_{i+2},k_{i}\right)\;\forall\; i \in \mathbb{Z}_{n+1} \end{aligned}}$} \\ \midrule[2pt]
    %%%%%%%%%%%%%%%%%%%%%%%%%%%%%%%%%%%%%%%
& $\overline{\mathcal{O}_{\text{min}}} (\mathfrak{g})$ & \begin{tabular}{c}
     Affine Twisted / Untwisted Balanced  \\
    Dynkin Quiver for $\mathfrak{g}$
\end{tabular}  & see Table~\ref{tab:dynkin} \\ \midrule 
\Block{7-1}<\rotate>{\large{\textsc{Non-Abelian Quivers}}} & $\begin{cases}
    A_1 & (g=1) \\ D_{g+1} & (g \geq 2)
\end{cases}$ & \begin{tikzpicture}[baseline=0]
            \tikzset{node distance=0.5cm};
            \node (1a) [gauge,label={[align=center]below:\small{$2$}}] {};
            \node (1b) [above of=1a,scale=0.001,label={[align=center]above:\small{$g$}}] {};
            \draw[line width=1pt,darkgray] (1a) to[out=45,in=0] (1b);
            \draw[line width=1pt,darkgray] (1b) to[out=180,in=135] (1a);
            \end{tikzpicture} & \small{$g \geq 1$} \\ \cmidrule{2-4}
& $\mathcal{Y}(\ell)$ & \begin{tikzpicture}[baseline=0]
            \tikzset{node distance=1cm};
            \node (1a) [gauge,label={[align=center]above:\small{$1$}}] {};
            \node (1b) [gauge,right of=1a,label={[align=center]above:\small{$2$}}] {};
            \draw[line width=2pt,darkgray, decoration={markings, mark=at position 0.65 with {\arrow[scale=0.7]{>}}}, postaction={decorate}] (1b)to node[midway,above,black] {\small{$\ell$}} (1a);
            \end{tikzpicture} & \small{$\ell\geq4$} \\ \cmidrule{2-4}
&  $gc_n$   & \begin{tikzpicture}[baseline=0][baseline=0]
\tikzset{node distance=1cm};
\node (0) at (0,0) [gauge,label={[align=center]above:\small{$2$}}] {};
\node (1) at (1,0) [gauge,label={[align=center]above:\small{$2$}}] {};
\node (2) at (2,0)  {$\cdots$};
\node (3) at (3,0) [gauge,label={[align=center]above:\small{$2$}}] {};
\node (4) at (4,0) [gauge,label={[align=center]above:\small{$2$}}] {};
\draw[double distance=1.5pt,scaling nfold=3,decoration={markings, mark=at position 0.65 with {\arrow[scale=0.7]{>}}}, postaction={decorate}] (0) to (1);
\draw (1)--(2)--(3);
\draw[double distance=1.5pt,scaling nfold=2,decoration={markings, mark=at position 0.65 with {\arrow[scale=0.7]{>}}}, postaction={decorate}] (4) to (3);
\end{tikzpicture} & \small{$n\geq2$} \\ \cmidrule{2-4}
%%%%%%%%%%%%%%%%%%%%%%%%%%%%%%%%%%%%
 & $gb_n$   & \begin{tikzpicture}[baseline=0][baseline=0]
\tikzset{node distance=1cm};
\node (0) at (0,0) [gauge,label={[align=center]above:\small{$2$}}] {};
\node (1) at (1,0) [gauge,label={[align=center]above:\small{$2$}}] {};
\node (2) at (2,0)  {$\cdots$};
\node (3) at (3,0) [gauge,label={[align=center]above:\small{$2$}}] {};
\node (4) at (4,0) [gauge,label={[align=center]above:\small{$1$}}] {};
\draw[double distance=1.5pt,scaling nfold=3,decoration={markings, mark=at position 0.65 with {\arrow[scale=0.7]{>}}}, postaction={decorate}] (0) to (1);
\draw (1)--(2)--(3);
\draw[double distance=1.5pt,scaling nfold=2,decoration={markings, mark=at position 0.65 with {\arrow[scale=0.7]{>}}}, postaction={decorate}] (3) to (4);
\end{tikzpicture}  & \small{$n\geq2$} \\ \cmidrule{2-4}
%%%%%%%%%%%%%%%%%%%%%%%%%%%%%%%%%%%%
 &   $gd_n$   & \begin{tikzpicture}[baseline=0][baseline=0]
\tikzset{node distance=1cm};
\node (0) at (0,0) [gauge,label={[align=center]above:\small{$2$}}] {};
\node (1) at (1,0) [gauge,label={[align=center]above:\small{$2$}}] {};
\node (2) at (2,0)  {$\cdots$};
\node (3) at (3,0) [gauge,label={[align=center]above:\small{$2$}}] {};
\node (4a) at (3.7,.4) [gauge,label={[align=center]right:\small{$1$}}] {};
\node (4b) at (3.7,-.4) [gauge,label={[align=center]right:\small{$1$}}] {};
\draw[double distance=1.5pt,scaling nfold=3,decoration={markings, mark=at position 0.65 with {\arrow[scale=0.7]{>}}}, postaction={decorate}] (0) to (1);
\draw (1)--(2)--(3);
\draw[line width=2,darkgray,decoration={markings, mark=at position 0.65 with {\arrow[scale=0.7]{<}}}, postaction={decorate}] (3) to node[midway,above,black] {\small{$\ell_1$}}(4a);
\draw[line width=2,darkgray,decoration={markings, mark=at position 0.65 with {\arrow[scale=0.7]{<}}}, postaction={decorate}] (3) to node[midway,below,black] {\small{$\ell_2$}}(4b);
%\draw (1)--(2)--(3)--(4a) (3)--(4b);
\end{tikzpicture}  & \small{{$\!\begin{aligned}&\ell_{1},\ell_{2}\geq1\\&\textrm{gcd}\left(\ell_{1},\ell_{2}\right)=\textrm{gcd}\left(\ell_{1,2},3\right)=1\\&n\geq3\end{aligned}$}} \\ \cmidrule{2-4} 
%%%%%%%%%%%%%%%%%%%%%%%%%%%%%%%%%%%%
 & $\mathcal{J}_{2,3}$ & \begin{tikzpicture}[baseline=0][baseline=0]
\tikzset{node distance=1cm};
\node (0) at (0,0) [gauge,label={[align=center]above:\small{$1$}}] {};
\node (1) at (1,0) [gauge,label={[align=center]above:\small{$2$}}] {};
\node (2) at (2,0) [gauge,label={[align=center]above:\small{$1$}}] {};
\draw[double distance=1.5pt,scaling nfold=2,decoration={markings, mark=at position 0.65 with {\arrow[scale=0.7]{>}}}, postaction={decorate}] (1) to (0);
\draw[double distance=1.5pt,scaling nfold=3,decoration={markings, mark=at position 0.65 with {\arrow[scale=0.7]{>}}}, postaction={decorate}] (1) to (2);
\end{tikzpicture} & - \\  \cmidrule{2-4}
%%%%%%%%%%%%%%%%%%%%%%%%%%%%%%%%%%%%
 & $\mathcal{J}_{3,3}$ & \begin{tikzpicture}[baseline=0][baseline=0]
\tikzset{node distance=1cm};
\node (0) at (0,0) [gauge,label={[align=center]above:\small{$1$}}] {};
\node (1) at (1,0) [gauge,label={[align=center]above:\small{$2$}}] {};
\node (2) at (2,0) [gauge,label={[align=center]above:\small{$1$}}] {};
\draw[double distance=1.5pt,scaling nfold=3,decoration={markings, mark=at position 0.65 with {\arrow[scale=0.7]{>}}}, postaction={decorate}] (1) to (0);
\draw[double distance=1.5pt,scaling nfold=3,decoration={markings, mark=at position 0.65 with {\arrow[scale=0.7]{>}}}, postaction={decorate}] (1) to (2);
\end{tikzpicture} & -\\ \midrule[2pt]
\end{NiceTabular}
\caption{Classification of stable unitary quivers (see Definition \ref{def:quiver}). The gray-colored edges denote non-simply laced edges, defined by parameters $\ell$ and $k$. Condition $\prod_{i}k_{i}=\prod_{j}\ell_{j}$ for $\overline{h}_{n,\sigma}$ is equivalently described by the length function (see Definition \ref{def:quiver}) $L(1)=\prod^{n+1}_{j=2}\ell_{j}$, $L(2)=\prod^{n+1}_{j=2}k_{j}$, $L(i)=\prod_{j=i}^{n+1}k_{j}\prod_{m=2}^{i-1}\ell_{m}$ for the set of vertices $V=\left\{1,2,\dots,n+1\right\}$, starting the labelling on the upper-left vertex and continuing clock-wise. Quivers with subscript $n$ in their label/geometry consist of $\left(n+1\right)$-many vertices. Edges of the form $(k_i , \ell_i)$ with $k_i > 1$ and $\ell_i > 1$ are considered in Section \ref{sec:pq_edge}. In Sections \ref{sec:def} and \ref{sec:proofs}, one should impose $k_i = 1$ or $\ell_i = 1$. }
\label{tab:results}
\end{table}
Note in particular the addition of the new quiver families\footnote{Note that the quiver $gb_2$ was already discussed in the initial papers on the Decay and Fission algorithm \cite{Bourget:2023dkj,Bourget:2024mgn}, using the same method of derivation as in this work.} labeled $gb_n$, $gc_n$ and $gd_n$, which we claim to complete the list of unitary quivers whose Coulomb branch is an ICSS. 
As a first characterization, Table~\ref{tab:HWG} provides the isometry algebra as well as the Highest Weight Generating (HWG) function. Basic facts about Hilbert series and HWGs are reviewed in Appendix \ref{app:HS_and_HWG}. 
As with almost all known ICSSs\footnote{Recall, for $\overline{\mathcal{O}_{\text{min}}}\left(\mathfrak{g}\right)$, $\PL [ \mathrm{HWG} ]= \chi_{\mathrm{adj}} t^2 $. In contrast, the HWG for $\mathcal{J}_{2,3}$ and $\mathcal{J}_{3,3}$ do not have polynomial PL.}, the HWG has a polynomial plethystic logarithm (see Definition \ref{def:PL}), which is indicative of the simplicity of the moduli space. The HWG of $gc_n$ coincides with that of $\overline{h}_{2n+1,(3,1,\dots,1)}$, hinting at a new realization of that geometry, see Section \ref{sec:HWG_results}.  

Lastly, we relax the initial assumptions and conjecture the result to hold in a more general quiver setting, see Section~\ref{sec:pq_edge}.

\begin{table}[h]
    \centering
        \begin{tabular}{ccc} \toprule 
        ICSS & Symmetry & $\PL(\mathrm{HWG})$ \\ \midrule 
            $gb_n$ & $\mathfrak{so}_{2n+1}$ & $\mu_2 t^2 + (1 + \mu_1^2 + \mu_1^3) t^4 + \mu_1^3 t^6 -\mu_1^6 t^{12}  $ \\
            $gc_n$ & $\mathfrak{u}_1 \oplus \mathfrak{su}_{2n+1}$ & $(1+ \mu_1 \mu_{2n}) t^2 + (q \mu_1^3 + q^{-1} \mu_{2n}^3) t^4 - \mu_1^3 \mu_{2n}^3 t^8$ \\
            $gd_n$ & $\mathfrak{so}_{2n}$ & $\mu_2 t^2 + (1 + \mu_1^2 + \mu_1^3) t^4 + \mu_1^3 t^6 -\mu_1^6 t^{12}  $ \\ \bottomrule 
        \end{tabular}
    \caption{Plethystic logarithm of the HWG for the three new stable quiver families; two of which ($gb_n$ and $gd_n$) give rise to two new families of ICSSs. Note that the HWG for $gb_n$ and $gd_n$ are identical and independent of $n$. The $\mu_i$ are fugacities for the non-Abelian summand of the symmetry algebra and $q$ is the $\mathfrak{u}_1$ fugacity. }
    \label{tab:HWG}
\end{table}

\paragraph{Future Directions. }  We conjecture that the local fundamental group of the new families is trivial. It would be of great interest to prove this, in order to answer Beauville's question. More generally, one should aim at proving the isolated character of the symplectic singularities in our list directly from the Coulomb branch construction, and not using the Decay and Fission Conjecture. In turn, such a proof could constitute a first step in a proof of the conjecture itself.

\paragraph{Acknowledgments.}
We thank Paul Levy for sharing his insights regarding the new quiver families. We thank Julius Grimminger, Amihay Hanany, Daniel Juteau and Ben Webster for stimulating discussions. The work of SMS and MS is supported by the Austrian Science Fund (FWF), START project ``Phases of quantum field theories: symmetries and vacua'' STA 73-N [grant DOI: 10.55776/STA73]. SMS and MS also acknowledge support from the Faculty of Physics, University of Vienna.  SMS acknowledges the financial support by the Vienna Doctoral School in Physics (VDSP). The work of QL is supported by \'Ecole Normale Supérieure - PSL through a CDSN doctoral grant. 
MS gratefully acknowledges support from the Simons Center for Geometry and Physics, Stony Brook University, during the Workshop ``Symplectic Singularities, Supersymmetric QFT, and Geometric Representation Theory,'' at which the final stages of this work were performed.
%%%%%%%%%%%%%%%%%%%%%%
\section{Definitions and Tools}
\label{sec:def}
\subsection{Good Quivers}

\begin{definition} \label{def:quiver}
A \textbf{quiver} $Q$ is a triple $(V,A,K)$ where $V$ is a finite set, $A$ a function $V\times V\to \mathbb{Z}$ and $K$ a function $V\to \mathbb Z_{>0}$, such that
\begin{assertions}
	\item for all $x \in V$, $A(x,x) = -2+2g_x$, for some $g_x \in \mathbb{Z}_{\geq 0}$. If $K(x) = 1$, then $A(x,x)=-2$.
    \item  there exists a function $L: V\to\mathbb Z_{>0}$, called a \emph{length function}, such that for every $x,y\in V$, $A(x,y) L(y)= A(y,x)L(x)$.
	\item for all $ x\neq y \in V$, if $A(x,y)$ and $A(y,x)$ are non-zero, then both are positive and one is a divisor of the other.
\end{assertions}
\end{definition}
\noindent Given a quiver $Q = (V,A,K)$ and an integer $N \geq 1$, we denote by $N \cdot Q$ the quiver $(V,A, N K)$. 

\begin{rmq}
\begin{compactitem}
\item Elements of $V$ are called \emph{vertices} of $Q$, $A$ the \emph{adjacency matrix}, and $K$ the \emph{weight function}. We say that $K(x)$ is the weight of $x \in V$.  When $V = \{ 1,\ldots,n\}$, we omit it and write $Q = (A,K)$ where $A$ is a matrix and $K$ a column vector.
Two vertices are neighbors if they are distinct and have a non-zero adjacency matrix coefficient. 
\item A Cartan matrix satisfying (ii) is known as symmetrizable Cartan matrix. Note here, that  property (iii) adds a restriction. We say that $Q$ is \emph{simply-laced} if its adjacency matrix is symmetric. Note that this allows for multiple edges between two vertices.  
\item In the gauge theory interpretation of quivers, the length functions are crucial in defining the gauge group, as reviewed in Appendix \ref{app:HS_and_HWG}. 
\item The integer $g_{x}$ is by definition the number of \emph{loops} for the vertex $x\in V$, and by definition a node of weight $1$ has no loops. 
\item Let $Q = (V,A,K)$ be a quiver. Its \emph{underlying graph} is the graph whose  set of vertices is $V$ and with a edge between $x,y\in V,x\neq y,$ when $A(x,y)\neq 0$. We say that a quiver is \emph{connected} (resp.\ \emph{cyclic, a tree, linear}, etc.) if its underlying graph is. The \emph{degree} of a vertex is its number of neighbors.
\item We say that two quivers are \emph{isomorphic} if there is a bijection between their sets of vertices preserving the adjacency matrices and the weight vectors.
\end{compactitem}
\end{rmq}

\begin{definition} \label{def:good}
    A connected quiver $Q$ is \textbf{quasi-good}\footnote{This notion weakens the goodness criterion of \cite{Gaiotto:2008ak}, and it turns out to be the one needed for the decay and fission conjecture. The underlying geometric translation is that quasi-good quiver might not have a conical Coulomb branch (they have only if the quiver is good), but they have a unique lowest-dimensional symplectic leaf, and therefore a unique maximal-dimensional transverse slice to a leaf. } if it is non-empty and one of the following holds true:
    \begin{assertions}
        \item Its Hilbert Series $\HS_Q(t)$ (see Definition~\ref{def:HS}) converges, is non-constant, and has in its series expansion no coefficient at order $t$.
        \item $Q$ is $N \cdot A_0^{(1)}$ for $N \geq 2$ or $N \cdot X_n^{(r)}$ for $N \geq 1$ and $X_n^{(r)}$ is one of the affine Dynkin quivers shown in Figure~\ref{Fig:Simply_Laced} and Table \ref{tab:dynkin}. 
    \end{assertions}
\end{definition}

A necessary condition \cite{Gaiotto:2008ak} for the ``quasi-good'' property reads
\begin{align}
\forall\, x\in V\, :\quad \sum_{y \in V} A(x,y)K(y) \geq 0 \; , \label{eq:good}
\end{align}
which is sufficient if $Q$ is simply-laced and not of type $N \cdot X_n^{(r)}$. 
\begin{rmq}
    \begin{itemize}
        \item The $N \cdot X_n^{(r)}$ quivers with $N>1$ are not stable, see Definition~\ref{def:stable}. Therefore, their Coulomb branches are not ICSSs. For $N \cdot X_n^{(r)}$ with $N=1$ both (i) and (ii) hold.
        \item If $\PL(\HS_Q(t)) = 2h\, t$ for some $h\in \mathbb{Z}_{>0}$, then $Q$ is said to be \emph{free}. If $\PL(\HS_Q(t)) = 2h \, t + \ldots$, then $Q$ is said to contain free parts. (See Definition~\ref{def:PL}, for the Plethystic Logarithm (PL).) 
   \end{itemize} 
\end{rmq}

\begin{definition}
Let $Q=  (V,A,K)$ be a quiver. A \textbf{subquiver} of $Q$ is a quiver $Q' = (V',A', K')$ where $V'\subseteq V$, $K'\leq K|_{V'}$ and $A'$ is defined by $A'(x,y) = -2$ if $x=y$ and $K(x) = 1$ and $A'(x,y) = A(x,y)$ otherwise.

If $V'$ is a subset of $V$, then the subquiver on $V'$ is $Q'  =(V',A', K|_{V'})$ with $A'$ defined as before.
\end{definition}

\begin{rmq}
A subquiver of a quasi-good quiver is not necessarily quasi-good. 
\end{rmq}

\subsection{Moduli Space of Vacua, Coulomb Branch, and Symplectic Leaves} 
A 3d $\Ncal=4$ gauge theory admits two maximal branches of the space of supersymmetric vacua: the Higgs and Coulomb branch. Here, the emphasis is placed on the Coulomb branch. Its mathematical definition was established in \cite{Nakajima:2015txa,Braverman:2016wma,Nakajima:2019olw}. It was later proven in \cite{Weekes:2020rgb,Bellamy:2023lqf} that Coulomb branches of all quiver gauge theories have symplectic singularities in the sense of \cite{Beauville:2000}. It then follows that 3d $\Ncal=4$ Coulomb branches admit a finite stratification into symplectic leaves \cite{Kaledin:2006}. As this is a finite partially ordered set, the stratification is naturally encoded in a Hasse diagram.

\begin{definition}
    Let $Q$ be a quasi-good quiver. We call $\mathcal{M}_C (Q)$ its \emph{Coulomb branch}, as defined in \cite{Nakajima:2019olw}. This is a conical symplectic singularity. 
\end{definition}

\subsection{The Decay and Fission Algorithm}

The Decay and Fission algorithm is reviewed here. It conjecturally provides a combinatorial construction of the Hasse diagram of $\mathcal{M}_C (Q)$.  We denote by $\leq$ the usual partial order on functions $V \rightarrow \mathbb{Z}$.

\begin{definition}
    Let $Q=(V,A,K)$ be a quiver. A \textbf{fission product} of $Q$ is a multiset $\multiset{Q_1,\ldots,Q_n}$ with $n\geq 0$, where $Q_i = (V_i,A_i,K_i)$ are quivers such that :
    \begin{assertions}
        \item for each $1\leq i\leq n$, $Q_i$ is a quasi-good connected subquiver of $Q$;
        \item $\sum_{i=1}^n K_i\leq K$, where we define each $K_i$ to vanish on $V\backslash V_i$
    \end{assertions}

    Let us call $\mathcal L(Q)$ the set of fission products of $Q$. 

    A \textbf{decay product} of $Q$ is a quiver $Q'$ such that $\multiset{ Q'} \in \mathcal L(Q)$. If $Q,Q'$ are two quivers such that $Q$ has a decay product isomorphic to $Q'$, we write $Q\rightsquigarrow Q'$.
\end{definition}

\begin{rmq}
    The empty fission product $\multiset{}$ is always in $\mathcal L(Q)$. Hence $Q$ is a decay product of itself if and only if it is quasi-good and connected.
\end{rmq}

The set of fission products can be equipped with a natural partial order:
\begin{definition}\label{def:partial_order}
    Let $Q$ be a quiver and $\multiset{Q_1,\ldots,Q_n},\multiset{Q'_1,\ldots,Q'_m}$ two fission products of $Q$. We write $\multiset{Q_1,\ldots,Q_n} \preccurlyeq \multiset{Q'_1,\ldots,Q'_m}$ if there exists a partition $\{ 1,\ldots,n\} = \bigsqcup_{j=1}^m I_j$, with the $I_j$ possibly empty, such that for every $1\leq j \leq m$, $\multiset{Q_i:i\in I_j}\in\mathcal L(Q'_j)$. 
\end{definition}

This leads to the following conjecture:
\begin{conjecture}[Decay and Fission algorithm \cite{Bourget:2023dkj,Bourget:2024mgn}]
\label{conj:Decay-Fission}
There exists a 1-to-1 correspondence between the poset of symplectic leaves of the Coulomb branch of a quasi-good 3d $\mathcal{N}=4$ quiver theory $Q$ and the poset $(\mathcal{L} (Q), \succcurlyeq)$ of decay and fission products of the quasi-good quiver $Q$. 
\end{conjecture}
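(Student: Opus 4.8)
The goal is an isomorphism of finite posets between the symplectic leaves of $\mathcal{M}_C(Q)$, ordered by closure inclusion, and $(\mathcal{L}(Q),\succcurlyeq)$. By \cite{Weekes:2020rgb,Bellamy:2023lqf} the Coulomb branch is a conical symplectic singularity, so by \cite{Kaledin:2006} its leaves form a finite poset, with a unique open dense leaf and, by conicity, a unique $0$-dimensional leaf (the cone point). The plan is to attach to each fission product $P=\multiset{Q_1,\ldots,Q_n}$ a leaf $\mathcal{L}_P$ characterized by the requirement that the transverse slice of $\mathcal{M}_C(Q)$ along $\mathcal{L}_P$ be isomorphic, as a conical symplectic singularity, to $\prod_{i=1}^{n}\mathcal{M}_C(Q_i)$, the surviving decoupled Coulomb directions accounting for $\mathcal{L}_P$ itself; I would then show that $P\mapsto\mathcal{L}_P$ is a well-defined order isomorphism. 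The endpoints fix the conventions: the cone point corresponds to $\multiset{Q}$ (unbroken theory, whole-cone slice) and the open leaf to $\multiset{}$ (fully broken theory, point-like slice), matching the fact that $\multiset{Q}$ and $\multiset{}$ are respectively the minimum and maximum of $(\mathcal{L}(Q),\succcurlyeq)$.

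The geometric heart is a transverse-slice lemma: for every leaf $\mathcal{L}$ of $\mathcal{M}_C(Q)$, the (formal or analytic) slice $\mathcal{S}_\mathcal{L}$ transverse to $\mathcal{L}$ is isomorphic to $\prod_i\mathcal{M}_C(Q_i)$ for a fission product canonically attached to $\mathcal{L}$. The underlying mechanism is that a generic Coulomb-branch VEV on $\mathcal{L}$ partially breaks $G=\prod_v\mathrm{U}(n_v)$, and the residual low-energy theory is a disjoint union of quiver gauge theories $Q_1,\ldots,Q_n$ whose Coulomb branches are the transverse factors: the bound $\sum_i K_i\le K$ is the conservation of gauge rank under the breaking, and the requirement that each $Q_i$ be a \emph{good connected} subquiver is exactly the condition that every residual sector be an honest interacting or free theory whose Coulomb branch is a CSS. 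I would try to prove this inside the Braverman--Finkelberg--Nakajima model \cite{Braverman:2016wma,Nakajima:2019olw}, by analyzing the residual torus and $\mathbb{C}^\ast$ fixed loci and identifying the associated slice algebras in the convolution algebra with the BFN algebras of the residual quivers. For Dynkin-type quivers this specializes to the known identification of Coulomb-branch slices with generalized affine Grassmannian slices; the task is to extend it to arbitrary good quivers, including the non-simply-laced edges of Table~\ref{tab:results}.

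Granting the lemma, the remaining steps are comparatively formal. For \emph{injectivity}, the lemma should attach to each leaf a canonical residual quiver---the unbroken low-energy theory---whose connected components constitute $P$; as this residual theory is an invariant of the leaf, $P\mapsto\mathcal{L}_P$ is injective. The abstract symplectic type of the slice alone would not suffice, since inequivalent good quivers can share a Coulomb branch, as with the identification of $gc_n$ and $\overline h_{2n+1,(3,1,\dots,1)}$ in Table~\ref{tab:results}; it is the residual quiver, genuine combinatorial data, that must be tracked. \emph{Surjectivity} is the lemma applied to an arbitrary leaf. For the \emph{order}, one shows $\mathcal{L}_P\subseteq\overline{\mathcal{L}_{P'}}$ if and only if $P\succcurlyeq P'$: passing from the deeper leaf $\mathcal{L}_P$, with the larger residual theory, to the shallower $\mathcal{L}_{P'}$ turns on additional VEVs that further decay and fission each residual sector of $P$, and collecting the resulting pieces reproduces $P'$; this is precisely the statement that the components of $P'$ partition into fission products of the components of $P$, i.e.\ $P'\preccurlyeq P$, as in Definition~\ref{def:partial_order}. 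At each stage the monopole formula gives a numerical check, since $\dim\mathcal{L}_P+\sum_i\dim\mathcal{M}_C(Q_i)=\dim\mathcal{M}_C(Q)$ must hold.

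The main obstacle is the transverse-slice lemma in full generality. Two features make it hard: the \emph{fission} phenomenon---the genuine local factorization of $\mathcal{M}_C(Q)$ into a product of lower-dimensional Coulomb branches at special loci---which must be proved rather than posited, and the \emph{non-simply-laced} edges, where the BFN formalism and the slice identifications are least developed. I would attack the lemma by induction on $\dim\mathcal{M}_C(Q)$. The base case is the classification of the minimal, codimension-two degenerations: each elementary transverse slice must be a Kleinian singularity or a minimal nilpotent orbit closure realized by a single decay or fission move, which is within reach of existing results on minimal degenerations of Coulomb branches. The inductive step uses the local product structure $\mathcal{M}_C(Q)\sim\mathcal{L}\times\mathcal{S}_\mathcal{L}$ near each leaf, applying the induction hypothesis to the strictly smaller factors $\mathcal{M}_C(Q_i)$; the compatibility of ``a slice of a slice is again a slice'' with the recursive Definition~\ref{def:partial_order} is exactly what propagates the correspondence throughout the Hasse diagram and closes the argument.
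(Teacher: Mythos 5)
The statement you are trying to prove is not proven in the paper: it is Conjecture~\ref{conj:Decay-Fission}, imported from \cite{Bourget:2023dkj,Bourget:2024mgn} and explicitly \emph{assumed} throughout (Theorem~\ref{thm:final_result} is conditional on it, and the paper's own ``Future Directions'' paragraph states that even partial results in this direction would be ``a first step in a proof of the conjecture itself''). So there is no proof in the paper to compare against, and your text must stand or fall on its own merits. It does not stand: everything of substance is deferred to your ``transverse-slice lemma,'' which asserts that the formal slice to any leaf of $\mathcal{M}_C(Q)$ is a product $\prod_i \mathcal{M}_C(Q_i)$ over a canonically attached fission product. That lemma \emph{is} the conjecture, up to formal bookkeeping; the ``comparatively formal'' remaining steps (injectivity, surjectivity, order compatibility) are exactly the easy part. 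Worse, your injectivity argument is circular: you need each leaf to remember a ``canonical residual quiver,'' but the residual quiver is physics data (the unbroken low-energy theory under the Higgs mechanism), not something available in the mathematical BFN definition of the Coulomb branch. You acknowledge yourself that the abstract symplectic type of the slice cannot distinguish fission products (e.g.\ $gc_n$ versus $\overline{h}_{2n+1,(3,1,\dots,1)}$, equation~\eqref{eq:CBequality}), and then simply posit that the finer invariant exists --- which is precisely what has to be proven.

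Your proposed induction also has a false base case. You claim the minimal (codimension-two, in your phrasing) degenerations ``must be a Kleinian singularity or a minimal nilpotent orbit closure,'' and that this ``is within reach of existing results.'' The central result of this very paper contradicts that: the complete list of ICSSs arising as Coulomb branches of stable unitary quivers (Table~\ref{tab:results}) includes, besides $A_{N-1}$, $D_{g+1}$ and $\overline{\mathcal{O}_{\min}}(\mathfrak{g})$, the families $h_{n,\delta,\sigma}$, $\overline{h}_{n,\sigma}$, $\mathcal{Y}(\ell)$, $gb_n$, $gc_n$, $gd_n$, and the sporadic $\mathcal{J}_{2,3}$, $\mathcal{J}_{3,3}$ --- all of which occur as elementary transverse slices in Coulomb-branch Hasse diagrams, and several of which (notably $gb_n$, $gd_n$) are new in this paper and certainly not covered by ``existing results on minimal degenerations.'' Moreover, minimal degenerations of symplectic singularities need not have codimension two, as the non-surface entries of Table~\ref{tab:results} show. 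So even granting your inductive framework, the induction cannot get started as described, and the inductive step (``a slice of a slice is a slice'') again presupposes the local product structure with identified quiver factors, i.e.\ the unproven lemma.
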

\begin{rmq}
    The Decay and Fission algorithm also allows us to determine the minimal transitions, or, transverse slices in the form of a quiver. Using this, the underlying geometry can be determined.
\end{rmq}

\begin{definition}\label{def:stable}
 A quiver $Q$ is \textbf{stable} if it is quasi-good, non-empty, and its only decay product is itself. 
\end{definition}
\begin{rmq}
This is equivalent to $\mathcal L(Q) = \{\multiset{},\multiset{Q}\}$. Geometrically, it means that the Coulomb branch of $Q$ only has two leaves: the singular point and the regular part of the Coulomb branch. For a conical symplectic singularity, this is equivalent to the singularity being isolated.  
\end{rmq}

\section{Results and Proofs}
\label{sec:proofs}
In this section, Theorem~\ref{thm:final_result} is proven first for simply-laced and then for non-simply-laced quivers assuming Conjecture~\ref{conj:Decay-Fission}.
Before proceeding, we recall two facts:

\vspace{4pt}
\begin{theo}[Abelian case\cite{Bourget:2024asp}] \label{thm:list_stable_abelian_quivers}
The stable Abelian quivers\footnote{A quiver is said to be Abelian if $K(x)=1$ for all $x \in V$. } are given in Table~\ref{tab:results}.
\end{theo}
Moreover, any quiver that contains either $(i)$ two nodes connected with a simply-laced edge of multiplicity $N\geq 2$ or $(ii)$ a vertex $x\in V$ with $g_{x}\geq1$ cannot be a stable quiver, as it admits a decay product isomorphic to the quivers with geometry $A_{N-1}$ or $D_{g_x +1}$ (if $g_x > 1$) or $A_1$ (if $g_x = 1$), see Table~\ref{tab:results}. Therefore, any such quiver can be omitted in the following discussion.

\subsection{Simply-laced Quivers}
\label{sec:simply_laced}
To begin with, we consider quivers with simply-laced edges.
\begin{proposition}\label{prop:list_stable_simply_laced_quivers}
The quivers in Figure~\ref{Fig:Simply_Laced} are all quasi-good and stable. 
\end{proposition}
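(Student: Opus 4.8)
The plan is to treat the two defining properties separately. Goodness is immediate: the quivers of Figure~\ref{Fig:Simply_Laced} are precisely the diagrams $X_n^{(r)}$ with $N=1$, so they are good by Definition~\ref{def:good}(ii) (and the following remark records that condition (i) holds as well). The real content is stability, i.e.\ that $\mathcal{L}(Q)=\{\multiset{},\multiset{Q}\}$. The structural input I would exploit is that each of these quivers is \emph{balanced}: writing $A=-C^{\mathrm{aff}}$ for minus the affine Cartan matrix and $K=\delta=(a_i)$ for the vector of marks, one has $AK=0$, so $K$ spans the one-dimensional kernel of $A$. Since any fission product is a multiset of good connected subquivers $Q'=(V',A',K')$ with $K'\le K$, it suffices to classify such $Q'$ and then argue combinatorially; I would prove that the only one is $Q$ itself, splitting on whether $V'=V$.

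For a proper subset $V'\subsetneq V$, the induced subdiagram is a connected proper subgraph of an affine Dynkin diagram, hence of finite type, so its Cartan matrix $C'=-A'$ is positive definite. The necessary goodness condition \eqref{eq:good} then reads $C'K'\le 0$ componentwise; pairing with the strictly positive vector $K'$ gives $0<(K')^{\!\top}C'K'=\sum_{x}K'(x)\,(C'K')_x\le 0$, a contradiction. Thus no subquiver supported on a proper subset can be good. I expect this step to be comparatively routine, the only point requiring care being the standard fact that connected proper subdiagrams of an affine Dynkin diagram are of finite type.

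The delicate case is $V'=V$, where $A'=A$ and the only freedom is to lower weights; this is the main obstacle, since here the underlying graph is unchanged and one must rule out any genuine weight reduction. I would write $K'=K-\Delta$ with $\Delta\ge 0$ integral, so that \eqref{eq:good} becomes $A\Delta\le 0$, i.e.\ $C^{\mathrm{aff}}\Delta\ge 0$. Pairing with the positive kernel vector $\delta$ and using symmetry of $C^{\mathrm{aff}}$ gives $0=(C^{\mathrm{aff}}\delta)^{\!\top}\Delta=\sum_i \delta_i\,(C^{\mathrm{aff}}\Delta)_i$, and since every $\delta_i>0$ while each $(C^{\mathrm{aff}}\Delta)_i\ge 0$, this forces $C^{\mathrm{aff}}\Delta=0$, hence $\Delta\in\ker C^{\mathrm{aff}}=\langle\delta\rangle$, say $\Delta=c\,\delta$. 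Because $\delta$ has a component equal to $1$ (the affine node) and $K'=(1-c)\delta\ge 1$, we conclude $c=0$, so $K'=K$ and $Q'=Q$.

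Finally I would assemble the pieces: the only good connected subquiver with $K'\le K$ is $Q$, carrying the full weight $K$, so any fission product is a multiset of copies of $Q$; since $mK\le K$ with $K>0$ forces $m\le 1$, the only fission products are $\multiset{}$ and $\multiset{Q}$, which is stability by Definition~\ref{def:stable}. The crux of the argument is thus the interplay between the balance condition $AK=0$, the semidefiniteness of the affine Cartan matrix, and the presence of a mark-$1$ node; everything else reduces to positive-definiteness of finite-type Cartan matrices.
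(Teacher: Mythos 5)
Your argument is essentially a complete version of the one-line alternative the paper itself offers. The paper's primary proof is geometric: it identifies the Coulomb branches as $\mathbb{C}^2/\mathbb{Z}_N$ and $\overline{\mathcal{O}_{\min}}(\mathfrak{g})$, which are known ICSSs, so that stability follows from the leaf/fission-product correspondence of Conjecture~\ref{conj:Decay-Fission}; it then merely remarks that ``alternatively, it can be checked that these quivers contain no non trivial subquiver satisfying the necessary condition \eqref{eq:good}.'' Your Cartan-matrix analysis is exactly that check, carried out in full: positive definiteness of finite-type Cartan matrices rules out proper connected subquivers, and the semidefiniteness of the affine Cartan matrix, with kernel $\langle\delta\rangle$ and a mark-$1$ node, rules out full-support weight reductions. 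This buys something the paper's main route does not: your stability proof is unconditional, purely combinatorial from Definition~\ref{def:stable}, with no appeal to the conjecture or to Coulomb branch identifications.

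There is, however, one family in Figure~\ref{Fig:Simply_Laced} for which your stated structural premises fail: the two-vertex quiver $A_{N-1}$ with an edge of multiplicity $N$. For $N\geq 3$ this is \emph{not} an affine Dynkin quiver $X_n^{(r)}$, so goodness does not follow from Definition~\ref{def:good}(ii); it is good by condition (i) (its monopole formula converges to the Hilbert series of $\mathbb{C}^2/\mathbb{Z}_N$), or equivalently because \eqref{eq:good} holds with excess $N-2\geq 0$ and is sufficient for simply-laced quivers not of type $N\cdot X_n^{(r)}$. More importantly, this quiver is not balanced: $AK=(N-2,N-2)\neq 0$ for $N\geq 3$, so the identity $C^{\mathrm{aff}}\delta=0$ and the kernel argument $\Delta\in\ker C^{\mathrm{aff}}=\langle\delta\rangle$ that drive your $V'=V$ case simply do not apply here. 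The patch is immediate and worth stating: since $K=(1,1)$ is already minimal, no weight can be lowered, so the only full-support subquiver is $Q$ itself, while a single weight-$1$ vertex has balance $-2<0$ and violates \eqref{eq:good}; hence the only fission products are the empty one and $Q$. With that case treated separately (and noting that for $N=2$ the quiver is the balanced $A_1^{(1)}$, where your argument does apply), your proof is complete and correct.
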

%%%%%%%%%%%%%%%%%%%
% put the figure with the simply laced quivers
\begin{figure}[t!]
\centering
        \begin{tikzpicture}[baseline=0]
        %%%%%%%
        
        \tikzset{node distance=2cm};
        \node (1a) at (0,0) [gauge,label={[align=center]below:\small{$1$}}] {};
        \node (1b) [gauge,right of=1a,label={[align=center]below:\small{$1$}}] {};
        \draw[line width=2pt,darkgray] (1a)to node[midway, above, black] {\small{$N\geq2$}} (1b);
        \node (Label1) at (1,-0.5) {$A_{N-1}$};
                
        \node (2a) at (5,1.5) [gauge,label={[align=center]above:\small{$1$}}] {};
        \node (2b) [gauge,right of=2a,label={[align=center]above:\small{$1$}}] {};
        \node (2c) at (4,0) [gauge,label={[align=center]left:\small{$1$}}] {};
        \node (2d) at (8,0) [gauge,label={[align=center]right:\small{$1$}}] {};
        \node (2e) at (5,-1.5) [gauge,label={[align=center]below:\small{$1$}}] {};
        \node (2f) at (7,-1.5) [gauge,label={[align=center]below:\small{$1$}}] {};
        \node (DOT) at (6,-1.5) {$\cdots$};
        \node (End1) at (5.5,-1.5) {};
        \node (End2) at (6.5,-1.5) {};
        \node (Label2) at (6,-2) {$A^{(1)}_{n}$};
        \draw (2a)to (2b);
        \draw (2b)to (2d);
        \draw (2a)to (2c);
        \draw (2c)to (2e);
        \draw(2d)to (2f);
        \draw (2e)to (DOT);
        \draw (DOT)to (2f);
     %   \draw (2e)to (End1);
     %   \draw (2f)to (End2);
     %   \draw[black,dotted] (End1)to (End2);

        \tikzset{node distance=1cm};
        \node (3a) at (9.2,0.8) [gauge,label={[align=center]above:\small{$1$}}] {};
        \node (3b) at (10,0) [gauge,label={[align=center]above:\small{$2$}}] {};
        \node (3h) at (9.2,-0.8) [gauge,label={[align=center]below:\small{$1$}}] {};
        \node (3c) [gauge,right of=3b,label={[align=center]above:\small{$2$}}] {};
        \node (DOT) [right of=3c] {$\cdots$};
        \node (3dot1) at (11.5,0) {};
        \node (3dot2) at (12,0) {};
        \node (3e) [gauge,right of=DOT,label={[align=center]above:\small{$2$}}]{};
        \node (3f) [gauge,right of=3e,label={[align=center]above:\small{$2$}}]{};
        \node (3g) at (14.8,0.8) [gauge,label={[align=center]above:\small{$1$}}]{};
        \node (3i) at (14.8,-0.8) [gauge,label={[align=center]below:\small{$1$}}]{};
        \node (Label3) at (11.75,-0.5) {$D^{(1)}_{n}$};
        \draw (3a)to (3b);
        \draw (3h)to (3b);
        \draw (3b)to (3c);
        \draw (3c)to (DOT);
        \draw (DOT)to (3e);
        \draw (3e)to (3f);
        \draw (3g)to (3f);
        \draw (3i)to (3f);    

        \tikzset{node distance=0.8cm};
        \node (4a) at (0,-3) [gauge,label={[align=center]above:\small{$1$}}] {};
        \node (4b) [gauge,right of=4a,label={[align=center]above:\small{$2$}}] {};
        \node (4c) [gauge,right of=4b,label={[align=center]above:\small{$3$}}] {};
        \node (4d) [gauge,right of=4c,label={[align=center]above:\small{$2$}}] {};
        \node (4e) [gauge,right of=4d,label={[align=center]above:\small{$1$}}] {};
        \node (4f) [gauge,below of=4c,label={[align=center]right:\small{$2$}}] {};
        \node (4g) [gauge,below of=4f,label={[align=center]right:\small{$1$}}] {};
        \node (Label4) at (0.8,-3.8) {$E^{(1)}_{6}$};
        \draw (4a)to (4b);
        \draw (4b)to (4c);
        \draw (4c)to (4d);
        \draw (4e)to (4d);
        \draw (4c)to (4f);
        \draw (4g)to (4f);

        \node (5a) at (4.2,-3) [gauge,label={[align=center]above:\small{$1$}}] {};
        \node (5b) [gauge,right of=5a,label={[align=center]above:\small{$2$}}] {};
        \node (5c) [gauge,right of=5b,label={[align=center]above:\small{$3$}}] {};
        \node (5d) [gauge,right of=5c,label={[align=center]above:\small{$4$}}] {};
        \node (5e) [gauge,right of=5d,label={[align=center]above:\small{$3$}}] {};
        \node (5f) [gauge,right of=5e,label={[align=center]above:\small{$2$}}] {};
        \node (5g) [gauge,right of=5f,label={[align=center]above:\small{$1$}}] {};
        \node (5h) [gauge,below of=5d,label={[align=center]right:\small{$2$}}] {};
        \node (Label5) at (5.8,-3.8) {$E^{(1)}_{7}$};
        \draw (5a)to (5b);
        \draw (5b)to (5c);
        \draw (5c)to (5d);
        \draw (5d)to (5e);
        \draw (5e)to (5f);
        \draw (5g)to (5f);
        \draw (5d)to (5h);

        \node (6a) at (10,-3) [gauge,label={[align=center]above:\small{$1$}}] {};
        \node (6b) [gauge,right of=6a,label={[align=center]above:\small{$2$}}] {};
        \node (6c) [gauge,right of=6b,label={[align=center]above:\small{$3$}}] {};
        \node (6d) [gauge,right of=6c,label={[align=center]above:\small{$4$}}] {};
        \node (6e) [gauge,right of=6d,label={[align=center]above:\small{$5$}}] {};
        \node (6f) [gauge,right of=6e,label={[align=center]above:\small{$6$}}] {};
        \node (6g) [gauge,right of=6f,label={[align=center]above:\small{$4$}}] {};
        \node (6h) [gauge,right of=6g,label={[align=center]above:\small{$2$}}] {};
        \node (6i) [gauge,below of=6f,label={[align=center]right:\small{$3$}}] {};
        \node (Label6) at (13.2,-3.8) {$E^{(1)}_{8}$};
        \draw (6a)to(6b);
        \draw (6b)to (6c);
        \draw (6c)to (6d);
        \draw (6d)to (6e);
        \draw (6e)to (6f);
        \draw (6f)to (6g);
        \draw (6g)to (6h);
        \draw (6f)to (6i);
        
        \end{tikzpicture}
\caption{All simply-laced quivers that are stable. There are the affine Dynkin quivers $A_n^{(1)}$ ($n \geq 2$), $D_n^{(1)}$ ($n \geq 4$), and $E_{6,7,8}^{(1)}$; conventions follow \cite{Fuchs:1997jv}. The subscript in the name of each quiver is one less than the number of nodes. In addition, there is the 2-vertex quiver $A_{N-1}$ with an edge of multiplicity $N$.}
\label{Fig:Simply_Laced}
\end{figure}
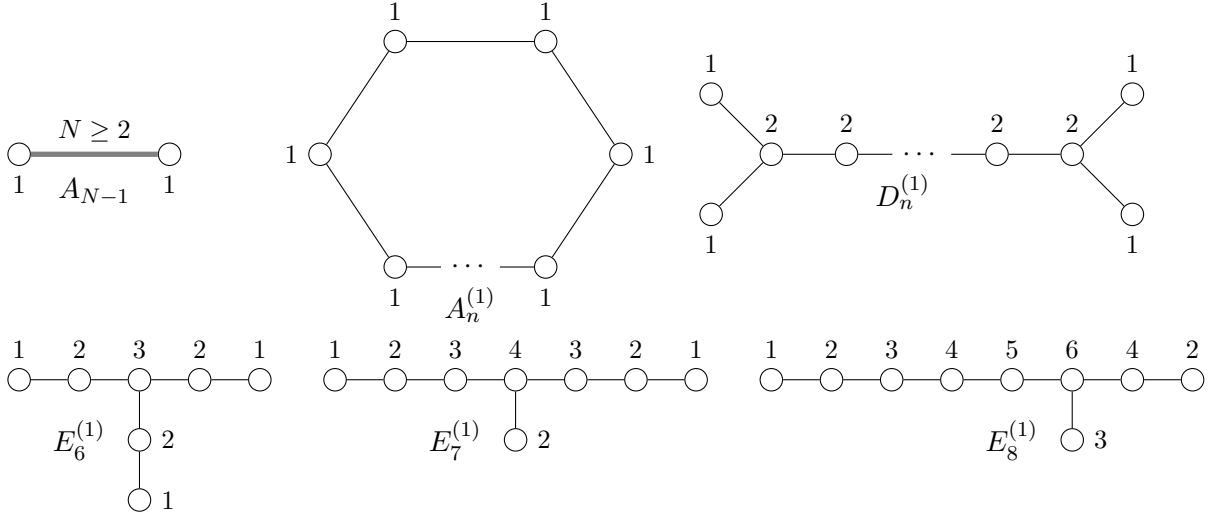
%%%%%%%%%%%%%%%%%%%%
\begin{proof}
    The two-vertex quiver has the $A$-type Kleinian surface singularity $\mathbb{C}^2\slash \mathbb{Z}_{N}$ as Coulomb branch. The untwisted affine Dynkin quivers $A_n^{(1)}$, $D_{n}^{(1)}$, $E_{6,7,8}^{(1)}$ (see Figure~\ref{Fig:Simply_Laced}) have Coulomb branches given by the closure of the  minimal nilpotent orbit $\overline{\mathcal{O}}_{\min}^{\mathfrak{g}}$, all of which are isolated symplectic singularities.

    Alternatively, it can be checked that these quivers contain no non trivial subquiver satisfying the necessary condition \eqref{eq:good}. 
\end{proof}
The main result in this section is the following theorem:

\vspace{4pt}
\begin{theo}\label{thm:simply_laced_decay}
Let $Q$ be a non-empty quasi-good simply-laced quiver. Then one of the quivers listed in Figure~\ref{Fig:Simply_Laced}  is a decay product of $Q$.
\end{theo}
\begin{proof}
Let $Q$ be a non-empty quasi-good simply-laced quiver. Each connected component of $Q$ is a decay product, therefore we only have to prove the theorem for connected quivers. If $Q$ contains a cycle, then this cycle, with all weights set to $1$, is a decay product isomorphic to $A^{(1)}_{n}$ for some $n$. So we only have to prove the theorem for acyclic quivers.

As $Q$ is finite and acyclic, it must have a node with exactly one neighbor. Then \eqref{eq:good} implies that the neighbor must have weight greater than or equal to $2$. Therefore, $Q$ is non-Abelian.

We can construct a decay product of $Q$ in the following way: choose a connected component in the subgraph of non-Abelian nodes. Add all the nodes with weight $1$ that are direct neighbors of these. If needed, we can remove a weight one vertex to ensure that the subquiver does not correspond to an over-extended $N \cdot X_n^{(r)}$ quiver\footnote{An over-extended $N\cdot X_n^{(r)}$ quiver includes an extra weight one vertex at the affine vertex of the Dynkin diagram.}. The resulting subquiver is quasi-good and, therefore, a decay product. It has the property that all $U(1)$ nodes have exactly $1$ neighbor. Hence proving the theorem for this decay product, implies the result for the original quiver. Therefore, \textit{we can assume that all nodes of weight $1$ have only one neighbor}.

If $Q$ has a node with four neighbors or more, then $Q$ decays to $D^{(1)}_4$. If $Q$ has two nodes with three neighbors, then it decays to $D^{(1)}_n$ for some $n \geq 5$. If $Q$ has no vertices with three or more neighbors, then it is linear. Let $a_1,\ldots,a_n$ be the weights of its nodes. Then, setting $a_0 = a_{n+1} = 0$, the ``good'' constraint \eqref{eq:good} reads:
\begin{equation}
    \forall k \in\{1,\ldots,n\}: \quad  2a_k \leq a_{k-1} + a_{k+1} \,,
\end{equation}
which is a convexity inequality. It implies :
\begin{equation}\label{eq:convexity_1}
    \forall k \in\{1,\ldots,n\}: \quad   a_k \leq \frac{n+1-k}{n+1} a_0 + \frac{k}{n+1}a_{n+1} = 0 \,.
\end{equation}
Therefore, we can assume that $Q$ has exactly one node with three neighbors, with all other nodes having one or two: $Q$ is of the form
\begin{align}
    \raisebox{-.5\height}{
\begin{tikzpicture}
    \tikzset{node distance = 0.75cm};
    \node (gg1) [gauge,label=below:{$a_2$}] {};
    \node (gg2) [gauge,right of=gg1,label=below:{$a_1$}] {};
    \node (gg3) [gauge,right of=gg2,label=below:{$m$}] {};
    \node (gg4) [gauge,right of=gg3,label=below:{$b_1$}] {};
    \node (gg5) [gauge,right of=gg4,label=below:{$b_2$}] {};
    \node (tt1) [gauge,above of=gg3,label=right:{$c_1$}] {};
    \node (tt2) [above of=tt1] {$\vdots$};
    \node (lhs) [left of=gg1] {$\cdots$};
    \node (rhs) [right of=gg5] {$\cdots$};
    \draw (lhs)--(gg1) (gg1)--(gg2) (gg2)--(gg3) (gg3)--(gg4) (gg4)--(gg5) (gg5)--(rhs) (gg3)--(tt1) (tt1)--(tt2);
\end{tikzpicture}
}
\label{eq:quiver_type_E}
\end{align}
Let $a_0=b_0=c_0 =m$ be the weight of the trivalent node, and $(a_1,a_2,\ldots)$, $(b_1,b_2,\ldots)$, $(c_1,c_2,\ldots)$ the sequences of weights along the three legs. We have $m\geq 2$ and $a_1,b_1,c_1 \geq 1$. As before, these sequences are convex and vanish eventually. A convexity inequality like \eqref{eq:convexity_1} implies that $a_1,b_1$ and $c_1$ are strictly smaller than $m$, and more generally that the sequences are strictly decreasing until they reach zero. The ``good'' constraint \eqref{eq:good} at the trivalent node reads:
\begin{equation}\label{eq:proof_E_balance1}
    2m\leq a_1 + b_1+c_1 \, . 
\end{equation}
Together with $a_1,b_1,c_1\leq m-1$, it implies $m\geq 3$.

If $a_2,b_2$ and $c_2$ are greater than or equal to $1$, then, as the sequences are strictly decreasing, $a_1,b_1$ and $c_1$ are greater than or equal to $2$ and $m \geq 3$. In this case, $Q$ decays to $E^{(1)}_6$. Therefore we can assume, without loss of generality, that $c_2= 0$. 

If $a_3,b_3 \geq 1$, then the sequences being strictly decreasing implies $m \geq 4$, and $Q$ decays to $E^{(1)}_7$.

Therefore we can assume that $b_3 =c_2= 0$. We then have $a_1\leq m-1$, $b_1 \leq \frac{2m}3$ and $c_1 \leq \frac{m}{2}$. Inserting two or three of these into \eqref{eq:proof_E_balance1}, we find $2m \leq \frac{13}{6}m -1$, hence $m\geq 6$, and
\begin{equation}
    a_1  \geq \frac{5m}{6} \, , \qquad b_1  \geq \frac{m}{2} + 1 \, , \qquad c_1   \geq \frac{m}{3}+1 \geq 3 \, . 
    \label{eq:proof_E_balance2}
\end{equation}

Convexity also implies that $ \forall k\geq 0, a_k\geq (1-k)m +ka_1$ and similar equations for the other two sequences. Using these in combination with \eqref{eq:proof_E_balance2}, we get, for all $k\geq 0$
\begin{equation}
  a_k  \geq \frac{m}{6}(6-k)  \, , \qquad    b_k  \geq \frac{m}{2}(2-k) +k  \, . 
\end{equation}
Finally, we find that for $k\in \{0,1,\ldots,6\}$, $a_k\geq 6-k$ and for $k\in\{0,1,2\}$, $b_k \geq 6-2k$. Therefore $Q$ decays to $E^{(1)}_{8}$. This concludes the proof.
\end{proof}

\begin{corollaire}
The quivers of Proposition~\ref{prop:list_stable_simply_laced_quivers} are exactly all the stable simply-laced quivers.
\end{corollaire}
\begin{proof}
None of those quivers are decay products of the others.  
\end{proof}

\begin{table}
\centering
\vspace*{-1.3cm}\begin{NiceTabular}{ccccc}
\midrule[2pt]
& Geometry & Quiver & Symmetry & Graph Name \\ \midrule[2pt]
\Block{11-1}<\rotate>{\large{\textsc{(Modified) Affine Dynkin-Type Quivers}}} 

& $\overline{\mathcal{O}_{\text{min}}}\left(\mathfrak{a}_{2n}\right)$ &         \begin{tikzpicture}[baseline=0]
            \tikzset{node distance=0.8cm};
            \node (1a) [gauge,label={[align=center]below:\small{$2$}}] {};
            \node (1b) [gauge,right of=1a,label={[align=center]below:\small{$2$}}] {};
            \node (1c) [gauge,right of=1b,label={[align=center]below:\small{$2$}}] {};
            \node (DOT) [right of=1c] {$\cdots$};
            \node (1dot) at (2.5,0) {};
            \node (2dot) at (3,0) {};
            \node (1e) [gauge,right of=DOT,label={[align=center]below:\small{$2$}}]{};
            \node (1f) [gauge,right of=1e,label={[align=center]below:\small{$2$}}]{};
            \node (1g) [gauge,right of=1f,label={[align=center]below:\small{$1$}}]{};
            \draw[double distance=1.5pt,scaling nfold=2,decoration={markings, mark=at position 0.65 with {\arrow[scale=0.7]{>}}}, postaction={decorate}] (1a)to (1b);
            \draw (1b)to (1c);
            \draw (1c) to (DOT);
            \draw (DOT) to (1e);
            \draw (1e)to (1f);
            \draw[double distance=1.5pt,scaling nfold=2,decoration={markings, mark=at position 0.65 with {\arrow[scale=0.7]{>}}}, postaction={decorate}] (1f)to (1g);
            \end{tikzpicture}
& $\mathfrak{a}_{2n}$ & $\tilde{B}_n^{(2)}$, $A_{2n}^{(2)}$ \\ \cmidrule{2-5} 
& $\overline{\mathcal{O}_{\text{min}}}\left(\mathfrak{a}_{2n-1}\right)$ &       \begin{tikzpicture}[baseline=0]
            \tikzset{node distance=0.8cm};
            \node (1a) at (-0.8,0.8) [gauge,label={[align=center]above:\small{$1$}}] {};
            \node (1b) at (0,0) [gauge,label={[align=center]above:\small{$2$}}] {};
            \node (1h) at (-0.8,-0.8) [gauge,label={[align=center]left:\small{$1$}}] {};
            \node (1c) [gauge,right of=1b,label={[align=center]above:\small{$2$}}] {};
            \node (DOT) [right of=1c] {$\cdots$};
            \node (1dot) at (1.5,0) {};
            \node (2dot) at (2,0) {};
            \node (1e) [gauge,right of=DOT,label={[align=center]above:\small{$2$}}]{};
            \node (1f) [gauge,right of=1e,label={[align=center]above:\small{$2$}}]{};
            \node (1g) [gauge,right of=1f,label={[align=center]above:\small{$2$}}]{};
            \draw[line width=2pt,darkgray, decoration={markings, mark=at position 0.65 with {\arrow[scale=0.7]{>}}}, postaction={decorate}] (1a)to node[midway,above,black] {\small{$\ell_1$}} (1b);
            \draw[line width=2pt,darkgray, decoration={markings, mark=at position 0.65 with {\arrow[scale=0.7]{>}}}, postaction={decorate}] (1h)to node[midway,right,black] {\small{$\ell_2$}} (1b);
            \draw (1b)to (1c);
            \draw (1c)to (DOT);
            \draw (DOT)to (1e);
           % \draw (1c)to (1dot);
           % \draw[line width=1pt,black,dotted] (1dot)to (2dot);
           % \draw (2dot)to (1e);
            \draw (1e)to (1f);
            \draw[double distance=1.5pt,scaling nfold=2,decoration={markings, mark=at position 0.65 with {\arrow[scale=0.7]{<}}}, postaction={decorate}] (1f)to (1g);
            \end{tikzpicture}
 & $\mathfrak{a}_{2n-1}$ &  $C_n^{(2)}$, $A_{2n-1}^{(2)}$\\ \cmidrule{2-5} 
&$\overline{\mathcal{O}_{\text{min}}}\left(\mathfrak{b}_n\right)$ &         \begin{tikzpicture}[baseline=0]
            \tikzset{node distance=0.8cm};
            \node (1a) at (-0.8,0.8) [gauge,label={[align=center]above:\small{$1$}}] {};
            \node (1b) at (0,0) [gauge,label={[align=center]above:\small{$2$}}] {};
            \node (1h) at (-0.8,-0.8) [gauge,label={[align=center]left:\small{$1$}}] {};
            \node (1c) [gauge,right of=1b,label={[align=center]above:\small{$2$}}] {};
            \node (DOT) [right of=1c] {$\cdots$};
            \node (1dot) at (1.5,0) {};
            \node (2dot) at (2,0) {};
            \node (1e) [gauge,right of=DOT,label={[align=center]above:\small{$2$}}]{};
            \node (1f) [gauge,right of=1e,label={[align=center]above:\small{$2$}}]{};
            \node (1g) [gauge,right of=1f,label={[align=center]above:\small{$1$}}]{};
            \draw[line width=2pt,darkgray, decoration={markings, mark=at position 0.65 with {\arrow[scale=0.7]{>}}}, postaction={decorate}] (1a)to node[midway,above,black] {\small{$\ell_1$}} (1b);
            \draw[line width=2pt,darkgray, decoration={markings, mark=at position 0.65 with {\arrow[scale=0.7]{>}}}, postaction={decorate}] (1h)to node[midway,right,black] {\small{$\ell_2$}} (1b);
            \draw (1b)to (1c);
            \draw (1c)to (DOT);
            \draw (DOT)to (1e);
            \draw (1e)to (1f);
            \draw[double distance=1.5pt,scaling nfold=2,decoration={markings, mark=at position 0.65 with {\arrow[scale=0.7]{>}}}, postaction={decorate}] (1f)to (1g);
            \end{tikzpicture}
& $\mathfrak{b}_{n}$ &  $B^{(1)}_n$\\ \cmidrule{2-5} 
&$\overline{\mathcal{O}_{\text{min}}}\left(\mathfrak{d}_n\right)$ &             \begin{tikzpicture}[baseline=0]
                \tikzset{node distance=0.8cm};
                \node (1a) at (-0.8,0.8) [gauge,label={[align=center]above:\small{$1$}}] {};
                \node (1b) at (0,0) [gauge,label={[align=center]above:\small{$2$}}] {};
                \node (1h) at (-0.8,-0.8) [gauge,label={[align=center]below:\small{$1$}}] {};
                \node (1c) [gauge,right of=1b,label={[align=center]above:\small{$2$}}] {};
                \node (DOT) [right of=1c] {$\cdots$};
                \node (1dot) at (1.5,0) {};
                \node (2dot) at (2,0) {};
                \node (1e) [gauge,right of=DOT,label={[align=center]above:\small{$2$}}]{};
                \node (1f) [gauge,right of=1e,label={[align=center]above:\small{$2$}}]{};
                \node (1g) at (4.3,0.8) [gauge,label={[align=center]above:\small{$1$}}]{};
                \node (1i) at (4.3,-0.8) [gauge,label={[align=center]below:\small{$1$}}]{};
                \draw[line width=2pt,darkgray, decoration={markings, mark=at position 0.65 with {\arrow[scale=0.7]{>}}}, postaction={decorate}] (1a)to node[midway,above,black] {\small{$\ell_1$}} (1b);
                \draw[line width=2pt,darkgray, decoration={markings, mark=at position 0.65 with {\arrow[scale=0.7]{>}}}, postaction={decorate}] (1h)to node[midway,right,black] {\small{$\ell_2$}} (1b);
                \draw (1b)to (1c);
                \draw (1c)to (DOT);
                \draw (DOT)to (1e);
                \draw (1e)to (1f);
                \draw[line width=2pt,darkgray, decoration={markings, mark=at position 0.65 with {\arrow[scale=0.7]{>}}}, postaction={decorate}] (1g)to node[midway,above,black] {\small{$\ell_3$}} (1f);
                \draw[line width=2pt,darkgray, decoration={markings, mark=at position 0.65 with {\arrow[scale=0.7]{>}}}, postaction={decorate}] (1i)to node[midway,left,black] {\small{$\ell_4$}} (1f);
                \end{tikzpicture}
& $\mathfrak{d}_{n}$ &  $D^{(1)}_n$ \\ \cmidrule{2-5}  
& $\overline{\mathcal{O}_{\text{min}}}\left(\mathfrak{d}_{n+1}\right)$ &         \begin{tikzpicture}[baseline=0]
            \tikzset{node distance=0.8cm};
            \node (1a) [gauge,label={[align=center]above:\small{$1$}}] {};
            \node (1b) [gauge,right of=1a,label={[align=center]above:\small{$2$}}] {};
            \node (1c) [gauge,right of=1b,label={[align=center]above:\small{$2$}}] {};
            \node (DOT) [right of=1c] {$\cdots$};
            \node (1dot) at (2.5,0) {};
            \node (2dot) at (3,0) {};
            \node (1e) [gauge,right of=DOT,label={[align=center]above:\small{$2$}}]{};
            \node (1f) [gauge,right of=1e,label={[align=center]above:\small{$2$}}]{};
            \node (1g) [gauge,right of=1f,label={[align=center]above:\small{$1$}}]{};
            \draw[double distance=1.5pt,scaling nfold=2,decoration={markings, mark=at position 0.65 with {\arrow[scale=0.7]{>}}}, postaction={decorate}] (1b)to (1a);
            \draw (1b)to (1c);
            \draw (1c)to (DOT);
            \draw (DOT)to (1e);
            \draw (1e)to (1f);
            \draw[double distance=1.5pt,scaling nfold=2,decoration={markings, mark=at position 0.65 with {\arrow[scale=0.7]{>}}}, postaction={decorate}] (1f)to (1g);
            \end{tikzpicture}
& $\mathfrak{d}_{n+1}$ & $B_n^{(2)}$, $D_{n+1}^{(2)}$ \\ \cmidrule{2-5} 
& $\overline{\mathcal{O}_{\text{min}}}\left(\mathfrak{d}_{4}\right)$ &         \begin{tikzpicture}[baseline=0]
            \tikzset{node distance=1cm};
            \node (1a) [gauge,label={[align=center]above:\small{$1$}}] {};
            \node (1b) [gauge,right of=1a,label={[align=center]above:\small{$2$}}] {};
            \node (1c) [gauge,right of=1b,label={[align=center]above:\small{$3$}}] {};
            \draw[line width=2pt,darkgray, decoration={markings, mark=at position 0.65 with {\arrow[scale=0.7]{>}}}, postaction={decorate}] (1a)to node[midway,above,black] {\small{$\ell_1$}} (1b);
            \draw[double distance=1.5pt,scaling nfold=3, decoration={markings, mark=at position 0.65 with {\arrow[scale=0.7]{>}}}, postaction={decorate}] (1c)to (1b);
            \end{tikzpicture}
& $\mathfrak{d}_{4}$ & $G_2^{(3)}$, $D_4^{(3)}$\\ \cmidrule{2-5} 
& $\overline{\mathcal{O}_{\text{min}}}\left(\mathfrak{e}_6\right)$ &       \begin{tikzpicture}[baseline=0]
                \tikzset{node distance=1cm};
                \node (1a) at (0,0) [gauge,label={[align=center]above:\small{$1$}}] {};
                \node (1b) [gauge,right of=1a,label={[align=center]above:\small{$2$}}] {};
                \node (1c) [gauge,right of=1b,label={[align=center]above:\small{$3$}}] {};
                \node (1d) [gauge,right of=1c,label={[align=center]above:\small{$2$}}] {};
                \node (1e) [gauge,right of=1d,label={[align=center]above:\small{$1$}}] {};
                \node (1f) [gauge,below of=1c,label={[align=center]right:\small{$2$}}] {};
                \node (1g) [gauge,below of=1f,label={[align=center]right:\small{$1$}}] {};
                \draw[line width=2pt,darkgray, decoration={markings, mark=at position 0.65 with {\arrow[scale=0.7]{>}}}, postaction={decorate}] (1a)to node[midway,above,black] {\small{$\ell_1$}} (1b);
                \draw (1b)to (1c);
                \draw (1c)to (1d);
                \draw[line width=2pt,darkgray, decoration={markings, mark=at position 0.65 with {\arrow[scale=0.7]{>}}}, postaction={decorate}] (1e)to node[midway,above,black] {\small{$\ell_2$}} (1d);
                \draw (1c)to (1f);
                \draw[line width=2pt,darkgray, decoration={markings, mark=at position 0.65 with {\arrow[scale=0.7]{>}}}, postaction={decorate}] (1g)to node[midway,right,black] {\small{$\ell_3$}} (1f);
                \end{tikzpicture}
 & $\mathfrak{e}_6$ & $E^{(1)}_6$\\ \cmidrule{2-5} 
& $\overline{\mathcal{O}_{\text{min}}}\left(\mathfrak{e}_{6}\right)$ &             \begin{tikzpicture}[baseline=0]
                \tikzset{node distance=1cm};
                \node (1a) at (0,0) [gauge,label={[align=center]above:\small{$2$}}] {};
                \node (1b) [gauge,right of=1a,label={[align=center]above:\small{$4$}}] {};
                \node (1c) [gauge,right of=1b,label={[align=center]above:\small{$3$}}] {};
                \node (1d) [gauge,right of=1c,label={[align=center]above:\small{$2$}}] {};
                \node (1e) [gauge,right of=1d,label={[align=center]above:\small{$1$}}] {};
                \draw (1a)to (1b);
                \draw[double distance=1.5pt,scaling nfold=2,decoration={markings, mark=at position 0.65 with {\arrow[scale=0.7]{>}}}, postaction={decorate}] (1b)to (1c);
                \draw (1c)to (1d);
                \draw[line width=2pt,darkgray, decoration={markings, mark=at position 0.65 with {\arrow[scale=0.7]{>}}}, postaction={decorate}] (1e)to node[midway,above,black] {\small{$\ell_1$}} (1d);
                \end{tikzpicture}
& $\mathfrak{e}_{6}$ & $F_4^{(2)}$, $E_6^{(2)}$\\ \cmidrule{2-5} 
&$\overline{\mathcal{O}_{\text{min}}}\left(\mathfrak{e}_7\right)$&               
\begin{tikzpicture}[baseline=0]
                \tikzset{node distance=0.8cm};
                \node (1a) at (0,0) [gauge,label={[align=center]above:\small{$1$}}] {};
                \node (1b) [gauge,right of=1a,label={[align=center]above:\small{$2$}}] {};
                \node (1c) [gauge,right of=1b,label={[align=center]above:\small{$3$}}] {};
                \node (1d) [gauge,right of=1c,label={[align=center]above:\small{$4$}}] {};
                \node (1e) [gauge,right of=1d,label={[align=center]above:\small{$3$}}] {};
                \node (1f) [gauge,right of=1e,label={[align=center]above:\small{$2$}}] {};
                \node (1g) [gauge,right of=1f,label={[align=center]above:\small{$1$}}] {};
                \node (1h) [gauge,below of=1d,label={[align=center]right:\small{$2$}}] {};
                \draw[line width=2pt,darkgray, decoration={markings, mark=at position 0.65 with {\arrow[scale=0.7]{>}}}, postaction={decorate}] (1a)to node[midway,above,black] {\small{$\ell_1$}} (1b);
                \draw (1b)to (1c);
                \draw (1c)to (1d);
                \draw (1d)to (1e);
                \draw (1e)to (1f);
                \draw[line width=2pt,darkgray, decoration={markings, mark=at position 0.65 with {\arrow[scale=0.7]{>}}}, postaction={decorate}] (1g)to node[midway,above,black] {\small{$\ell_2$}} (1f);
                \draw (1d)to (1h);
                \end{tikzpicture}
& $\mathfrak{e}_7$ &  $E^{(1)}_7$ \\ \cmidrule{2-5} 
& $\overline{\mathcal{O}_{\text{min}}}\left(\mathfrak{e}_8\right)$ &             
\begin{tikzpicture}[baseline=0]
                \tikzset{node distance=0.8cm};
                \node (1a) [gauge,label={[align=center]above:\small{$1$}}] {};
                \node (1b) [gauge,right of=1a,label={[align=center]above:\small{$2$}}] {};
                \node (1c) [gauge,right of=1b,label={[align=center]above:\small{$3$}}] {};
                \node (1d) [gauge,right of=1c,label={[align=center]above:\small{$4$}}] {};
                \node (1e) [gauge,right of=1d,label={[align=center]above:\small{$5$}}] {};
                \node (1f) [gauge,right of=1e,label={[align=center]above:\small{$6$}}] {};
                \node (1g) [gauge,right of=1f,label={[align=center]above:\small{$4$}}] {};
                \node (1h) [gauge,right of=1g,label={[align=center]above:\small{$2$}}] {};
                \node (1i) [gauge,below of=1f,label={[align=center]right:\small{$3$}}] {};
                \draw[line width=2pt,darkgray, decoration={markings, mark=at position 0.65 with {\arrow[scale=0.7]{>}}}, postaction={decorate}] (1a)to node[midway,above,black] {\small{$\ell$}} (1b);
                \draw (1b)to (1c);
                \draw (1c)to (1d);
                \draw (1d)to (1e);
                \draw (1e)to (1f);
                \draw (1f)to (1g);
                \draw (1g)to (1h);
                \draw (1f)to (1i);
                \end{tikzpicture}
& $\mathfrak{e}_8$ &  $E^{(1)}_8$ \\ \cmidrule{2-5} 
& $\overline{\mathcal{O}_{\text{min}}}\left(\mathfrak{f}_4\right)$ &         
\begin{tikzpicture}[baseline=0]
                \tikzset{node distance=1cm};
                \node (1a) at (0,0) [gauge,label={[align=center]above:\small{$1$}}] {};
                \node (1b) [gauge,right of=1a,label={[align=center]above:\small{$2$}}] {};
                \node (1c) [gauge,right of=1b,label={[align=center]above:\small{$3$}}] {};
                \node (1d) [gauge,right of=1c,label={[align=center]above:\small{$2$}}] {};
                \node (1e) [gauge,right of=1d,label={[align=center]above:\small{$1$}}] {};
                \draw[line width=2pt,darkgray, decoration={markings, mark=at position 0.65 with {\arrow[scale=0.7]{>}}}, postaction={decorate}] (1a)to node[midway,above,black] {\small{$\ell_1$}} (1b);
                \draw (1b)to (1c);
                \draw[double distance=1.5pt,scaling nfold=2,decoration={markings, mark=at position 0.65 with {\arrow[scale=0.7]{>}}}, postaction={decorate}] (1c)to (1d);
                \draw[line width=2pt,darkgray, decoration={markings, mark=at position 0.65 with {\arrow[scale=0.7]{>}}}, postaction={decorate}] (1e)to node[midway,above,black] {\small{$\ell_2$}} (1d);
                \end{tikzpicture}
& $\mathfrak{f}_4$ & $F^{(1)}_4$\\ \cmidrule{2-5} 
&$\overline{\mathcal{O}_{\text{min}}}\left(\mathfrak{g}_2\right)$ &      \begin{tikzpicture}[baseline=0]
            \tikzset{node distance=1cm};
            \node (1a) [gauge,label={[align=center]above:\small{$1$}}] {};
            \node (1b) [gauge,right of=1a,label={[align=center]above:\small{$2$}}] {};
            \node (1c) [gauge,right of=1b,label={[align=center]above:\small{$1$}}] {};
            \draw[line width=2pt,darkgray, decoration={markings, mark=at position 0.65 with {\arrow[scale=0.7]{>}}}, postaction={decorate}] (1a)to node[midway,above,black] {\small{$\ell_1$}} (1b);
            \draw[double distance=1.5pt,scaling nfold=3, decoration={markings, mark=at position 0.65 with {\arrow[scale=0.7]{<}}}, postaction={decorate}] (1c)to (1b);
            \end{tikzpicture}
& $\mathfrak{g}_2$ &  $G^{(1)}_2$\\

\midrule[2pt]
\end{NiceTabular}
\caption{All affine Dynkin-type quivers that allow for a modification by a non-simply laced edge connected to a vertex of weight $1$, see Section~\ref{subsec:trivial_edges}. In any quiver, all $\ell_i$ should be pairwise coprime. For $\overline{\mathcal{O}_{\text{min}}}\left(\mathfrak{a}_{2n-1}\right)$ and $\overline{\mathcal{O}_{\text{min}}}\left(\mathfrak{e}_{6}\right)$, the $\ell_i$ are odd. For $\overline{\mathcal{O}_{\text{min}}}\left(\mathfrak{f}_{4}\right)$, $\ell_2$ is odd. For $\overline{\mathcal{O}_{\text{min}}}\left(\mathfrak{d}_{4}\right)$, $\mathrm{gcd}(\ell_1,3)=1$.  The naming of the quivers follows the conventions of affine Dynkin diagram of \cite[Tab.~VIII]{Fuchs:1997jv}. All the quivers in the first five rows have $n+1$ vertices.}
\label{tab:dynkin}
\end{table}

\subsection{From Quiver Classification to Geometry Classification}\label{subsec:trivial_edges}

It is well-known that two non-isomorphic quivers $Q$ and $Q'$ can have the same Coulomb branch, $\mathcal{M}_C (Q) = \mathcal{M}_C (Q')$. So even though $Q$ and $Q'$ are distinct as combinatorial objects, they represent the same geometry. In this section, we provide a class of instances of this phenomenon which is crucial for our classification, see Table~\ref{tab:dynkin}. This amounts to enlarging slightly what we call a Dynkin graph, so as to encompass all quivers whose Coulomb branch is a minimal nilpotent orbit closure.

\begin{proposition}
    Let $Q = (V,A,K)$ be a quasi-good, stable, non-Abelian quiver. Assume $Q$ has an edge $E$ between a vertex $x$ of weight $K(x)=1$ and a vertex $y$ of weight $K(y) \geq 2$, with $A(x,y)=A(y,x)=1$. Let $Q' = (V,A',K)$ where $A=A'$ except for $A'(x,y) = \ell \geq 1$, and $Q'_{\rm{ab}} = (V,A',\bf{1})$ the abelian quiver obtained by setting the weights to 1. If $Q'_{\rm{ab}}$ is free then $\mathcal{M}_C (Q) = \mathcal{M}_C (Q')$. 
\end{proposition}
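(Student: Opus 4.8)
The plan is to compare the two Coulomb branches through their monopole-formula Hilbert series and to show that the edge modification is invisible to the resulting graded algebra. Since $Q$ and $Q'$ carry the same gauge group $\prod_{v\in V}\mathrm{U}(K(v))$, the vector-multiplet (Weyl and Casimir) dressing factors appearing in the monopole formulas of $\mathcal{M}_C(Q)$ and $\mathcal{M}_C(Q')$ coincide verbatim, so the two Hilbert series differ only through the contribution of the edge $E$ to the monopole conformal dimension. Concretely, the $\mathrm{U}(1)$ flux $m_x\in\mathbb{Z}$ of the vertex $x$ carries trivial dressing, so I would first sum over $m_x$ at fixed fluxes of all other vertices. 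The entire problem then reduces to the statement that this one-dimensional flux sum yields the same rational function of $t$ for $Q$ and for $Q'$, for every value of the surrounding fluxes (in particular those of $y$).

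Second, I would pin down how a non-simply-laced edge acts at a $\mathrm{U}(1)$ vertex: replacing the simply-laced edge $A(x,y)=1$ by $A'(x,y)=\ell$ rescales the way $m_x$ enters the bifundamental between $x$ and $y$, refining the effective one-parameter flux lattice attached to $x$ by the factor $\ell$; equivalently it is the datum of a discrete $\mathbb{Z}_\ell$-action on the $x$-direction of the lattice. Naively such a refinement changes the geometry — it is exactly what distinguishes $\mathbb{C}^2$ from $\mathbb{C}^2/\mathbb{Z}_\ell$ — so the entire content of the proposition is that the hypothesis removes this effect.

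Third, and crucially, I would exploit the freeness of $(V,A',1)$. Setting every weight to $1$ turns $Q'$ into an Abelian quiver whose Coulomb branch is governed purely by the charge lattice defined by $A'$; its being \emph{free}, i.e.\ $\mathbb{C}^{2h}$ with $\PL(\HS)=2h\,t$, is equivalent to a unimodularity/coprimality condition on that lattice (the toric-hyper-K\"ahler statement that the associated quotient is smooth, which is precisely what the coprimality constraints on the $\ell_i$ in Table~\ref{tab:dynkin} encode). I would use this unimodularity to exhibit a change of variables in the full, non-Abelian flux summation of $Q'$ that is a bijection of the summation lattice and that carries the $Q'$-summand to the $Q$-summand term by term, including the refinement by the Coulomb-branch symmetry fugacities. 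This produces the equality of the refined Hilbert series of $\mathcal{M}_C(Q)$ and $\mathcal{M}_C(Q')$. Ideally the same flux bijection lifts to an isomorphism of the Coulomb-branch (BFN) algebras, which would yield the variety isomorphism directly, with the Hilbert-series identity as its computational shadow; failing that, one concludes from both branches being normal conical symplectic singularities with identical isometry group and identical graded character (finitely generated, polynomial $\PL$).

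The main obstacle is the third step: translating the \emph{global} freeness of the abelianized quiver into the \emph{local} flux-sum identity in the presence of the non-Abelian vertex $y$, and of any further neighbors of $x$ — which obstruct a naive rescaling of $m_x$ in isolation, since $m_x$ then also enters unmodified edges. Controlling this interaction, i.e.\ showing that freeness is exactly the condition making the $\mathbb{Z}_\ell$-refinement act trivially on the interacting part of the Coulomb branch while merely reshuffling free directions, is where the real work lies, and where the coprimality hypotheses of Table~\ref{tab:dynkin} must be shown to be equivalent to freeness of $(V,A',1)$.
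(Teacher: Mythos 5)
Your overall strategy --- compare monopole formulas and read the freeness of the abelianized quiver as a unimodularity statement enabling a change of summation variables --- points in the right direction, but there are two concrete problems. First, your opening reduction is false: you cannot fix the fluxes of all surrounding vertices and demand that the one-dimensional sum over $m_x$ agree between $Q$ and $Q'$. Fiberwise the summands genuinely differ (schematically, $\sum_{m_x} t^{\,|m_x-m_1|+|m_x-m_2|}$ versus $\sum_{m_x} t^{\,|\ell m_x-m_1|+|\ell m_x-m_2|}$), and what compensates is the summation \emph{lattice} itself: for non-simply-laced quivers the monopole formula of Definition~\ref{def:HS} sums over $\Lambda/\mathbf{L}\mathbb{Z}$, and the length function $\mathbf{L}$ --- equivalently the cocharacter lattice of the image torus --- changes when $A(x,y)=1$ is replaced by $A'(x,y)=\ell$. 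The identity is global, not fiberwise, so the problem does not reduce to a one-dimensional flux-sum identity ``for every value of the surrounding fluxes.'' Second, the step you yourself flag as ``where the real work lies'' --- converting freeness into the actual change of variables in the presence of the non-Abelian node $y$ and any further neighbors of $x$ --- is the entire content of the proposition, and you leave it unexecuted; as it stands the proposal is a plan, not a proof.

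The paper closes exactly this gap with a short group-theoretic argument worth contrasting with your lattice-bijection picture. Both $Q$ and $Q'$ define representations $R,R'$ of the maximal torus $T=\prod_i U(1)^{n_i}$, and $\operatorname{Ker}(R)$, $\operatorname{Ker}(R')$ lie inside the diagonal subgroup $D=\prod_i U(1)_{\rm diag}$, on which $R|_D$, $R'|_D$ are precisely the morphisms of the abelianized quivers $(V,A,1)$ and $(V,A',1)$. Freeness of \emph{both} abelianizations --- of $(V,A',1)$ by hypothesis, and of $(V,A,1)$ because $Q$ is stable and non-Abelian (a non-free Abelian subquiver would, via the Abelian classification, produce a nontrivial decay product) --- forces both kernels to be connected, so each monopole formula depends only on $\operatorname{Im}(R)$, resp.\ $\operatorname{Im}(R')$. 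Then $R'$ factors as $R$ precomposed with the map applying $z\mapsto z^{\ell}$ at the node $x$, which is surjective on $U(1)$, whence $\operatorname{Im}(R')=\operatorname{Im}(R)$ and the Coulomb branches coincide; this is the global ``bijection'' you were seeking, obtained without any term-by-term matching. Note two points absent from your sketch: you invoke freeness only of $(V,A',1)$, whereas connectedness of $\operatorname{Ker}(R)$ on the $Q$-side is equally necessary and comes from stability; and your fallback ``same isometry plus same graded character implies same variety'' is not valid in general --- the correct conclusion comes from the construction itself depending only on the image-torus data.
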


\begin{proof}
The quiver $Q$ (and likewise for $Q'$) can be non-simply-laced, and therefore may not define a morphism $ \prod_x U(K(x))\to \prod_{\langle x;y\rangle} U(K(x) K(y))$, where the first product runs over the vertices and the second product runs over the edges.  
However, it does define a representation of the maximal torus $ R:\prod_x U(1)^{K(x)}\to \prod_{\langle x;y\rangle} U(K(x) K(y) )$. The monopole formula involves the quotient of $T = \prod_x U(1)^{K(x)}$ by $\operatorname{Ker}(R)_0$, the connected part of the kernel of $R$. 

The kernel $\operatorname{Ker}(R)$ is included in $ D= \prod_x U(1)_{x,\rm{diag}}$ where $U(1)_{x,\rm{diag}}$ is the diagonal subgroup of $U(K(x))$, and the restriction $R|_D$ is the morphism associated with the Abelian quiver obtained from $Q$ by setting all weights to $1$. As $Q$ is stable and non-Abelian, this quiver must be free. This implies, in particular, that the kernel $\operatorname{Ker}(R)$ is connected. The monopole formula for $Q$ therefore only depends on the image of $R$.

Likewise, because of the assumption on $\ell$, we see that $\operatorname{Ker}(R')$ is connected and the monopole formula for $Q'$ therefore only depends on the image of $R'$. To conclude, notice that we have a commutative diagram: 
\begin{center}
    \begin{tikzcd}
    D \ar[r] \ar[dr,"R'"']& D \ar[d,"R"]\\
    & \prod_{\langle x;y\rangle} U(K(x) K(y))
    \end{tikzcd}
\end{center}
The horizontal arrow is the morphism obtained by applying $z \in U(1)\mapsto z^\ell \in U(1)$ to the node $x$. In particular, this is a surjective morphism. Therefore, $\operatorname{Im}(R) = \operatorname{Im}(R')$ and both quivers have the same Coulomb branch. In particular, $Q'$ is a stable quiver.
\end{proof}

Note that in the physics literature, this construction can be understood via framing/unframing operations of the gauge group encoded in the quiver. This is also known as Crawley-Boevey moves \cite{Crawley-Boevey1} in the mathematics literature.

\subsection{Non-simply-laced Quivers}
\label{sec:general_case}
We now extend the classification to include quivers with non-simply-laced edges. Our main result, Theorem~\ref{thm:final_result}, is a direct consequence of the two following propositions. 

\begin{proposition} \label{prop:list_stable_non_abelian_non_simply_laced_quivers}
The quivers in Tables~\ref{tab:results} and \ref{tab:dynkin} are quasi-good. 
\end{proposition}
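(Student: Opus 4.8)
The plan is to verify criterion~(i) of Definition~\ref{def:good} for every entry --- that $\HS_Q(t)$ converges, is non-constant, and has vanishing coefficient at order $t$ --- organising the argument according to the lacing structure. I would first dispose of the cases already under control: the Abelian rows of Table~\ref{tab:results} are good by Theorem~\ref{thm:list_stable_abelian_quivers}, since stable quivers are in particular good; and the unique simply-laced non-Abelian entry $D_{g+1}$ (a single weight-$2$ vertex carrying $g\geq 2$ loops) is settled by the observation, following Definition~\ref{def:good}, that for a simply-laced quiver not of type $N\cdot X_n^{(r)}$ (here it carries $g\geq 2$ loops) the necessary inequality~\eqref{eq:good} is also sufficient; at the single vertex it reads $(-2+2g)\cdot 2 = 4(g-1)\geq 0$.

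For the modified affine Dynkin-type quivers of Table~\ref{tab:dynkin} I would argue via the Coulomb-branch identification established in Section~\ref{subsec:trivial_edges}. Each such quiver is a balanced (twisted or untwisted) affine Dynkin quiver in which the simply-laced edges incident to the weight-$1$ vertices have been promoted to non-simply-laced edges of multiplicities $\ell_i$. Iterating the Proposition of Section~\ref{subsec:trivial_edges} one edge at a time --- the coprimality and oddness hypotheses on the $\ell_i$ listed beneath Table~\ref{tab:dynkin} being precisely what ensures the freeness assumption ``$(V,A',1)$ free'' at each step --- shows that $\mathcal{M}_C$ is unchanged and equal to $\overline{\mathcal{O}_{\min}}(\mathfrak{g})$. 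This reduces goodness to the $\ell_i=1$ base quivers, whose Coulomb branch is the minimal nilpotent orbit closure: its Hilbert series converges, is non-constant, and has leading term $\dim(\mathfrak{g})\,t^2$ (reflecting $\PL(\mathrm{HWG})=\chi_{\mathrm{adj}}t^2$), so in particular there is no coefficient at order $t$. The untwisted base quivers are among those of Figure~\ref{Fig:Simply_Laced}; for the twisted ones the same $t^2$-leading behaviour is read off directly from the monopole formula.

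The genuinely new work concerns the non-simply-laced families $gb_n$, $gc_n$, $gd_n$ and the sporadic quivers $\mathcal{Y}(\ell)$, $\mathcal{J}_{2,3}$, $\mathcal{J}_{3,3}$, for which~\eqref{eq:good} is necessary but no longer sufficient. Here I would work directly with the monopole formula $\HS_Q(t)=\sum_{\mathbf m} t^{2\Delta(\mathbf m)}P(t,\mathbf m)$: the coefficient at order $t$ vanishes exactly when no nonzero magnetic charge $\mathbf m$ has conformal dimension $\Delta(\mathbf m)=\frac12$, while convergence follows once $\Delta(\mathbf m)\to\infty$ as $|\mathbf m|\to\infty$. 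Writing $\Delta$ as the sum of the negative vector-multiplet terms $-\sum_{a<b}|m_i^{(a)}-m_i^{(b)}|$ over the nodes and the positive hypermultiplet terms across each (possibly doubly or triply laced) edge, the three sporadic quivers reduce to finite checks. For the families I would prove a uniform lower bound $\Delta(\mathbf m)\geq 1$ for all $\mathbf m\neq 0$ by induction along the chain of weight-$2$ nodes, bounding the edge contributions below by a positive-definite quadratic form in the charges; the double/triple-edge cap (and, for $gd_n$, the terminal fork of two weight-$1$ nodes) is exactly what forbids a flat direction, securing simultaneously the absence of a $\Delta=\frac12$ charge and the growth $\Delta\to\infty$, hence convergence and non-constancy.

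The main obstacle is precisely this uniform estimate for the infinite families: because~\eqref{eq:good} is not sufficient in the non-simply-laced setting, goodness cannot be concluded node-by-node, and one must instead control $\Delta(\mathbf m)$ globally over the whole GNO lattice --- checking both that the short/long decorations never permit a charge with $\Delta=\frac12$ and that they never leave a sublattice on which $\Delta$ stays bounded.
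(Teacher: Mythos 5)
Your overall architecture is sensible, but it is far more ambitious than what the paper actually does, and the hard parts of your plan are left unexecuted. The paper's proof of this proposition is a one-line citation: the Hilbert series of every quiver in Tables~\ref{tab:results} and \ref{tab:dynkin} is already known --- from the quoted literature for $\mathcal{Y}(\ell)$, $h_{n,\delta,\sigma}$, $\overline{h}_{n,\sigma}$, $\mathcal{J}_{2,3}$, $\mathcal{J}_{3,3}$ and the Dynkin-type quivers, and from the computations of Section~\ref{sec:HWG_results} for $gb_n$, $gc_n$, $gd_n$ --- and each of these series converges, is non-constant, and has no order-$t$ term. Your first two steps are correct and genuinely more self-contained than a citation: the Abelian rows do follow from Theorem~\ref{thm:list_stable_abelian_quivers} (stable implies good), and $D_{g+1}$ does follow from the stated sufficiency of \eqref{eq:good} for simply-laced quivers not of type $N\cdot X_n^{(r)}$.

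The gaps are concentrated exactly where new content is needed. First, for the families $gb_n$, $gc_n$, $gd_n$ and the sporadics $\mathcal{Y}(\ell)$, $\mathcal{J}_{2,3}$, $\mathcal{J}_{3,3}$ you only describe a strategy --- a uniform bound $\Delta(\mathbf{m})\geq 1$ for $\mathbf{m}\neq 0$ plus growth of $\Delta$ --- and you yourself flag this estimate as the main obstacle; nothing in the proposal establishes it, so the proposition remains unproven precisely for the quivers this paper introduces. (Incidentally, a lower bound by a positive-definite \emph{quadratic} form cannot be right as stated, since $\Delta$ is piecewise linear in $\mathbf{m}$.) Second, your reduction of Table~\ref{tab:dynkin} via the proposition of Section~\ref{subsec:trivial_edges} bottoms out on the base quivers with all $\ell_i=1$, and your claim that the untwisted bases lie in Figure~\ref{Fig:Simply_Laced} is false for $B_n^{(1)}$, $F_4^{(1)}$ and $G_2^{(1)}$, which are untwisted but non-simply-laced; the same issue arises for the twisted bases $A_{2n}^{(2)}$, $A_{2n-1}^{(2)}$, $D_{n+1}^{(2)}$, $D_4^{(3)}$, $E_6^{(2)}$. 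For all of these, both the goodness and the stability that the Section~\ref{subsec:trivial_edges} proposition demands of the base quiver are deferred to an unperformed monopole-formula computation. So the only cases your argument fully closes are those the paper already handles trivially, while every non-simply-laced case is asserted rather than proved; to complete the proof you must either carry out the $\Delta$-estimates or, as the paper does, invoke the known Hilbert series computations.
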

\begin{proof}
    The Hilbert series for each quiver in Tables~\ref{tab:results} and \ref{tab:dynkin} is known, cf.\ the references provided.
\end{proof}

\vspace{4pt}
\begin{proposition}
\label{thm:non-simply_laced}
Let $Q$ be a quasi-good quiver. Then $Q$ admits a decay product isomorphic to one of the quivers listed in Tables~\ref{tab:results} and \ref{tab:dynkin}. 
\end{proposition}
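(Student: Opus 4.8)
The plan is to adapt the strategy of the simply-laced case (Theorem~\ref{thm:simply_laced_decay}) to quivers containing non-simply-laced edges, via a sequence of reductions followed by a structural case analysis. As before it suffices to treat connected quivers, since each connected component is itself a decay product. The opening reductions dispose of the ``easy'' configurations: if $Q$ contains a vertex $x$ with $g_x \geq 1$ or two vertices joined by a simply-laced edge of multiplicity $N \geq 2$, then $Q$ decays to $D_{g_x+1}$ or $A_{N-1}$, both of which appear in Table~\ref{tab:results}, and we are done. Likewise, if $Q$ carries no non-simply-laced edge it is simply-laced and Theorem~\ref{thm:simply_laced_decay} applies, while if $Q$ is Abelian then Theorem~\ref{thm:list_stable_abelian_quivers} applies. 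Hence I may assume $Q$ is connected, non-Abelian, free of loops and of multiple simply-laced edges, and carries at least one non-simply-laced edge, which (by the standing assumption of this section) is an $(\ell,1)$-edge with $\ell \geq 2$.

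I would then reduce the weight-$1$ part exactly as in the simply-laced argument: by passing to a good connected subquiver one may assume every weight-$1$ vertex has a single neighbor, so that the ``skeleton'' of $Q$ is its subquiver $Q_{\geq 2}$ on the vertices of weight $\geq 2$, decorated by weight-$1$ leaves. The core analytic input is the good condition \eqref{eq:good}, which now reads $2K(x) \leq \sum_{y\neq x} A(x,y)K(y)$ at every vertex since each node has $A(x,x)=-2$. Along a non-simply-laced edge the length function of Definition~\ref{def:quiver}(iii) rescales the two sides by $\ell$, so I would reweight these inequalities using $L$ to recover a genuine weighted discrete convexity statement analogous to \eqref{eq:convexity_1}. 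Convexity then forces the weights on any pendant leg to be strictly decreasing to zero, produces a decay onto a $D^{(1)}_n$-type quiver from any vertex of valence $\geq 4$ (or from two trivalent vertices), and otherwise leaves $Q_{\geq 2}$ a chain or a single trivalent star with tightly bounded weights.

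With the skeleton thus constrained, I would run the case analysis according to the number and placement of the non-simply-laced edges. The possibilities should organize as follows: an $(\ell,1)$-edge attaching a weight-$1$ leaf to a length-one weight-$2$ core yields $\mathcal{Y}(\ell)$; two non-simply-laced edges meeting at a single weight-$2$ vertex with two weight-$1$ legs give $\mathcal{J}_{2,3}$ or $\mathcal{J}_{3,3}$; a chain of weight-$2$ nodes carrying a triple edge at one end gives $gb_n$, $gc_n$, or $gd_n$ according to how the far end terminates (a double edge to a weight-$1$ leaf, a double edge to a weight-$2$ node, or a split into two weight-$1$ legs); and the configurations in which the good condition is saturated at every vertex reproduce the balanced Dynkin-type quivers of Table~\ref{tab:dynkin}, whose Coulomb branches are the minimal orbit closures $\overline{\mathcal{O}_{\min}}(\mathfrak{g})$. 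In each case I would verify that the good-condition inequalities, together with the coprimality and length-function constraints recorded in the tables, either pin the subquiver down to one of the listed stable forms or admit a further decay to a smaller listed quiver.

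The main obstacle is precisely this final combinatorial step. With non-simply-laced edges the weights are no longer bounded by small universal constants, since the length function can amplify them by factors of $\ell$, so the convexity bookkeeping must be carried out carefully to confirm that the weight-$2$ chains with a terminal triple edge are exactly the new families $gb_n,gc_n,gd_n$ and conceal no further decay, and to separate these genuinely stable configurations from the merely balanced ones that fold onto Table~\ref{tab:dynkin}. Matching the arithmetic side conditions --- pairwise coprimality of the $\ell_i$, the parity requirements, and $\mathrm{gcd}(\ell,3)=1$ where relevant --- against the requirement that the candidate be stable rather than decaying is the delicate part that carries the bulk of the work.
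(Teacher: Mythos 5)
Your overall architecture --- connected reduction, weight-one vertices forced to have a single neighbor, convexity from \eqref{eq:good}, then a case analysis on the number and placement of non-simply-laced edges --- is the same as the paper's, and several of your case targets ($\mathcal{Y}(\ell)$ for high lacedness, $\mathcal{J}_{2,3}/\mathcal{J}_{3,3}$, the $gb_n/gc_n/gd_n$ trichotomy by chain termination) are correct. But two load-bearing reductions are missing, and without them the plan fails. First, you invoke Theorem~\ref{thm:list_stable_abelian_quivers} only when $Q$ itself is Abelian, whereas the paper applies it whenever $Q$ admits a good Abelian decay product, i.e.\ whenever $Q$ has a \emph{non-free Abelian subquiver}; after this step one may assume all Abelian subquivers of $Q$ are free. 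This is what eliminates cycles (your skeleton-plus-convexity analysis silently assumes a tree, but a cycle through a non-simply-laced edge does not decay to $A_n^{(1)}$ by setting weights to $1$; it is caught by the Abelian classification via the $\overline{h}_{n,\sigma}$ family), and it is what disposes of two non-simply-laced edges directed toward each other with non-coprime lacedness. Concretely, a chain of weight-$2$ nodes with two inward-pointing double edges matches none of the targets in your case list; in the paper it is dispatched because its weight-one Abelian subquiver has $\delta = 2$, is not free, and hence decays to a stable $h_{n,\delta,\sigma}$ quiver. Your proposal has no mechanism producing these Abelian targets.

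Second, you never use the equivalence of Section~\ref{subsec:trivial_edges}: a non-simply-laced edge sourced at a weight-one node (a \emph{trivial} edge) can be replaced by a simply-laced one without changing the Coulomb branch, provided the weight-one reduction of the quiver is free --- which holds by the Abelian reduction above. The paper strips all trivial edges, proves the statement for the simplified quiver, and reinstates the edges at the end, landing on the Table~\ref{tab:dynkin} entries with their coprimality and parity conditions satisfied \emph{automatically} (those conditions are exactly freeness of the Abelian weight-one subquivers). The ``delicate matching of arithmetic side conditions'' that you flag as carrying the bulk of the work is thus self-inflicted: carrying general $\ell_i$ on the legs through the weighted convexity bookkeeping is precisely what this reduction avoids. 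Two smaller points: the proposition only asks for \emph{some} listed quiver to arise as a decay product, and the listed quivers need only be good (Proposition~\ref{prop:list_stable_non_abelian_non_simply_laced_quivers}), so your worry about verifying that the targets ``conceal no further decay'' is not part of this statement; and your claim that valence $\geq 4$ yields a $D^{(1)}_n$-type decay holds only in the simply-laced sub-case --- once a single non-simply-laced edge is present, degree-$\geq 3$ vertices lead instead to $C_n^{(2)}$, $B_n^{(1)}$, or $gd_n$, and the remaining linear cases require the explicit convexity endgames (forcing $a\geq 3$, $b_0 \geq 2$, $b_1\geq 1$ for $\ell=3$, giving $D_4^{(3)}$, and the $a_2,b_2$ dichotomy for $\ell=2$, giving $F_4^{(1)}$ or $F_4^{(2)}$) that your sketch leaves open.
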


\begin{proof}
Let $Q$ be a quasi-good quiver.  Any connected component of $Q$ is a decay product, so we only have to deal with the case where $Q$ is connected.

If $Q$ is Abelian or admits a quasi-good Abelian decay product (or even a non-free Abelian subquiver), then we can apply Theorem~\ref{thm:list_stable_abelian_quivers}. Therefore, we can assume that $Q$ has only free Abelian subquivers. In particular, $Q$ is non-Abelian, has no cycles\footnote{The existence of a length function is required here, as it plays a crucial role in the proof \cite{Bourget:2024asp} of Theorem~\ref{thm:list_stable_abelian_quivers}.} and if two non-simply-laced edges are directed towards each other, then their lacednesses are coprime. 

Just as in the simply-laced case, we can construct a decay product of $Q$ where all nodes of weight $1$ have exactly one neighbor. Since the result for this decay product implies the result for $Q$, we can assume that $Q$ has this property.

Let us call \emph{trivial} a non-simply-laced edge whose source is a node of rank $1$. As discussed in Section~\ref{subsec:trivial_edges}, the field theory associated with $Q$, and in particular its Coulomb branch, is the same as the one of $Q'$, obtained by replacing all trivial non-simply-laced edges by simply-laced ones. Explicitly, if we prove the theorem for $Q'$ and find a decay $Q' \rightsquigarrow Q''$, then $Q''$, after adding back trivial edges if needed, is a decay product of $Q$. We can therefore restrict to the case where $Q$ has no trivial non-simply-laced edges.

If $Q$ has an edge of lacedness greater than or equal to $4$, then it decays to $\mathcal{Y}(\ell)$ (cf.\ Table~\ref{tab:results}). Therefore we can assume that all non-simply-laced edges have lacedness $2$ or $3$.

If $Q$ has an edge of lacedness $3$ whose source has two or more neighbors, then $Q$ decays to $G_2^{(1)}$ (cf. Table~\ref{tab:dynkin}), $\mathcal J_{2,3}$ or $\mathcal J_{3,3}$ (cf. Table~\ref{tab:results}). Therefore, we can assume that any triple edge is sourced at a node of $Q$ with exactly one neighbor. 

If $Q$ has two or more non-simply-laced edges, then, by considering two such edges such that all edges in between are simply-laced, we see that $Q$ decays to a $\tilde{B}_n^{(2)}$ (or $A_{2n}^{(2)}$) (cf.\ Table~\ref{tab:dynkin}), $B_n^{(2)}$ (or $D_{n+1}^{(2)}$) (cf.\ Table~\ref{tab:dynkin}), $gc_n$ (cf.\ Table~\ref{tab:results}), or $gb_n$ (cf.\ Table~\ref{tab:results}).

If it has only simply-laced edges, Theorem~\ref{thm:simply_laced_decay} applies.
Therefore, we can assume that $Q$ has exactly one non-simply-laced edge.

If $Q$ has a vertex of degree greater than or equal to $3$, then $Q$ decays to  $C_n^{(2)}$ (or $A_{2n-1}^{(2)}$) (Table~\ref{tab:dynkin}), $B_n^{(1)}$ (Table~\ref{tab:dynkin}), or $gd_n$ (Table~\ref{tab:results}). Therefore, $Q$ is a linear quiver with exactly one non-simply-laced edge of lacedness $\ell \in \{2,3\}$.
If $\ell =3$, then $Q$ is shaped as in \begin{align}
    \raisebox{-.5\height}{
\begin{tikzpicture}
    \tikzset{node distance = 1cm};
    \node (gg1) [gauge,label=below:{$a$}] {};
    \node (gg2) [gauge,right of=gg1,label=below:{$b_0$}] {};
    \node (gg3) [gauge,right of=gg2,label=below:{$b_1$}] {};
    \node (gg4) [right of=gg3] {$\cdots$};
    \draw (gg1)--(gg2) (gg2)--(gg3) (gg3)--(gg4) ;
    \draw[double distance=1.5pt,scaling nfold=3,decoration={markings, mark=at position 0.65 with {\arrow[scale=0.7]{>}}}, postaction={decorate}] (gg1) to (gg2);
\end{tikzpicture}
}
\label{eq:proof_Q2_d4}
\end{align}
Let $a$ be the weight of the long node and $b_0,b_1,\ldots$ the weight of the short nodes. The balance conditions read:
\begin{subequations}
\begin{align}
2a &\leq 3 b_0 \, ,\label{eq:proof_d4_balance1}\\
2b_0 &\leq a +b_1 \,, \label{eq:proof_d4_balance2}\\
2b_n &\leq b_{n-1} + b_{n+1} \quad (\forall n \geq 1) \,.
\end{align}
\end{subequations}
The sequence $(b_n)_{n\geq 0}$ is convex and vanish for $n$ large enough. Therefore, it is positive and strictly decreasing before it reaches $0$. In particular, $b_1 \leq b_0 - 1$. Together with equations (\ref{eq:proof_d4_balance1}) and (\ref{eq:proof_d4_balance2}), this implies $a\geq 3$, $b_0\geq 2$ and $b_1 \geq 1$. Hence $Q$ decays to $D_4^{(3)}$ ($G_2^{(3)}$)  (cf. Table~\ref{tab:dynkin}).

We are only left with the case $\ell = 2$. If a long node has three neighbors, then $Q$ decays to a $B_n^{(1)}$ quiver. Therefore, we can assume that $Q$ is a linear quiver
\begin{align}
    \raisebox{-.5\height}{
\begin{tikzpicture}
    \tikzset{node distance = 1cm};
    \node (gg1) [gauge,label=below:{$a_1$}] {};
    \node (gg2) [gauge,right of=gg1,label=below:{$a_0$}] {};
    \node (gg3) [gauge,right of=gg2,label=below:{$b_0$}] {};
    \node (gg4) [gauge,right of=gg3,label=below:{$b_1$}] {};
    \node (lhs) [left of=gg1] {$\cdots$};
    \node (rhs) [right of =gg4] {$\cdots$};
    \draw (gg1)--(gg2) (gg3)--(gg4) (gg1)--(lhs) (gg4)--(rhs);
     \draw[double distance=1.5pt,scaling nfold=2,decoration={markings, mark=at position 0.65 with {\arrow[scale=0.7]{>}}}, postaction={decorate}] (gg2) to (gg3);
\end{tikzpicture}
}
\label{eq:proof_Q2_F}
\end{align}
Let us write $a_0,a_1,\ldots $ for the weights of the long nodes and $b_0,b_1,\ldots$ for the weights of the short nodes. We already know that $a_0 \geq 2,b_0\geq 1$. The balance conditions read, with $n\in \mathbb{Z}_{>0}$: 
\begin{subequations}
\begin{align}
2a_0 &\leq 2 b_0 + a_1 \,, \label{eq:2long_1}\\
2b_0 &\leq a_0+b_1 \,,\label{eq:2long_2} \\
2a_n &\leq a_{n-1} + a_{n+1}\,, \label{eq:2long_3}\\
2b_n &\leq b_{n-1} + b_{n+1} \,.\label{eq:2long_4}
\end{align}
\end{subequations}
As before, the sequences $(a_n)_{n\in\mathbb N}$ and $(b_n)_{n\in\mathbb N}$ are convex, positive and strictly decreasing before they reach $0$. In particular $b_n \leq b_0 - n$ and $a_n\leq a_0 - n$. This inequalities for $n=1$, together with \eqref{eq:2long_1} and  \eqref{eq:2long_2} imply that:
\begin{subequations}
\begin{align}
a_0 +1 \leq 2b_0 \,,\\
1+b_0 \leq a_0 \,.\label{eq:2long_6}
\end{align}
\end{subequations}
All the solutions of this system of inequalities satisfy $a_0\geq 3$ and $b_0\geq 2$.
Then, from (\ref{eq:2long_1}) and  (\ref{eq:2long_2}), we also see that $a_1\geq 2$ and $b_1\geq 1$. 

If $a_2 \geq 1$, we see that $Q$ decays to $F_4^{(1)}$ as $a_1\geq2$ (cf. Table~\ref{tab:dynkin}), so we can now assume that $a_2=0$. Equation  \eqref{eq:2long_3} reduces to $2a_1 \leq a_0$.

If $b_2 = 0$, we have  $2b_1  \leq b_0$. Together with \eqref{eq:2long_1} and \eqref{eq:2long_2}, we get $\frac34 a_0 \leq b_0 \leq \frac23 a_0$ which is a contradiction as $a_0 >0$. 
Therefore $b_2\geq 1$. This implies $b_1\geq 2$, $b_0\geq 3$, and, using \eqref{eq:2long_6},  $a_0\geq 4$. 
We conclude that $Q$ decays to $F_4^{(2)}$ (or $E_6^{(2)}$) (cf. Table~\ref{tab:dynkin}).
\end{proof}

\subsection{HWG Computations}
\label{sec:HWG_results}
Part of the insight gained in this work are the three new families of unitary quivers (cf.\ Table~\ref{tab:results}) --- $gb_n$, $gc_n$, $gd_n$. As a first analysis of the geometry, the (unrefined) Hilbert Series (see Definition~\ref{def:HS}) for the lowest dimensional cases are calculated to be  
\begin{align}
    \HS_{gb_{2}}(t)=&\frac{1+6t^{2}+42t^{4}+71t^{6}+122t^{8}+71t^{10}+42t^{12}+6t^{14}+t^{16}}{\left(1-t^{2}\right)^{4}\left(1-t^{4}\right)^{4}} 
    \\
    \HS_{gc_{2}}(t)=&\frac{1+20t^{2}+175t^{4}+590t^{6}+1290t^{8}+1550t^{10}+1290t^{12}+590t^{14}+175t^{16}+20t^{18}+t^{20}}{\left(1-t^{2}\right)^{5}\left(1-t^{4}\right)^{5}}
    \\
    \HS_{gd_{3}}(t)=&\frac{1+10t^{2}+85t^{4}+239t^{6}+545t^{8}+602t^{10}+545t^{12}+239t^{14}+85t^{16}+10t^{18}+t^{20}}{\left(1-t^{2}\right)^{5}\left(1-t^{4}\right)^{5}}
\end{align}
In addition, for the lowest $n$ cases we computed the refined Hilbert series, which allows to derive the Highest Weight Generating function (recall Definition~\ref{def:HWG}). The obtained HWGs are then expected to generalize into the full $gb_n$, $gc_n$, $gd_n$ families as shown in Table~\ref{tab:HWG}. For higher $n$ cases, this proposed HWG has been tested against unrefined Hilbert series computations. Note the remarkable fact that the HWG for $gb_n$ and $gd_n$ are exactly the same (even though the symmetry algebra is not the same), and they \emph{do not depend on $n$}. 
% \\[4pt]

The HWG for $gc_n$ coincides with that of $\overline{h}_{2n+1,\sigma = (3,1,\dots,1)}$ \cite[(3.18)]{Bourget:2021siw}.\footnote{We thank Paul Levy for pointing this out to us. } This leads to the conjecture that the two geometries are the same. If this is true, the $gc_n$ still should appear in our result as a stable quiver, but the associated Coulomb branch is a particular case of the $\overline{h}_{n,\sigma}$ family:
\begin{equation}\label{eq:CBequality}
   \mathcal{M}_C \left( \raisebox{-.5\height}{\begin{tikzpicture} 
            \tikzset{node distance=1cm};
            \node (1) at (-1,0) [gauge,label={[align=center]below:\small{$1$}}] {};
            \node (2) at (-2,0) [gauge,label={[align=center]below:\small{$1$}}] {};
            \node (22) at (-3,0) {$\cdots$};
            \node (3) at (-4,0) [gauge,label={[align=center]below:\small{$1$}}] {};
            \node (4) at (-5,0) [gauge,label={[align=center]below:\small{$1$}}] {};
            \node (1a) at (-1,1) [gauge,label={[align=center]above:\small{$1$}}] {};
            \node (2a) at (-2,1) [gauge,label={[align=center]above:\small{$1$}}] {};
            \node (22a) at (-3,1) {$\cdots$};
            \node (3a) at (-4,1) [gauge,label={[align=center]above:\small{$1$}}] {};
            \node (4a) at (-5,1) [gauge,label={[align=center]above:\small{$1$}}] {};
            \draw[double distance=1.5pt,scaling nfold=3, decoration={markings, mark=at position 0.65 with {\arrow[scale=0.7]{<}}}, postaction={decorate}] (3)to (4);
            \draw[double distance=1.5pt,scaling nfold=3, decoration={markings, mark=at position 0.65 with {\arrow[scale=0.7]{<}}}, postaction={decorate}] (3a)to (4a);
            \draw (3)--(22)--(2)--(1)--(1a)--(2a)--(22a)--(3a) (4)--(4a);
            \end{tikzpicture}} \right) =    \mathcal{M}_C \left( \raisebox{-.5\height}{
            \begin{tikzpicture} 
\tikzset{node distance=1cm};
\node (0) at (0,0) [gauge,label={[align=center]above:\small{$2$}}] {};
\node (1) at (1,0) [gauge,label={[align=center]above:\small{$2$}}] {};
\node (2) at (2,0)  {$\cdots$};
\node (3) at (3,0) [gauge,label={[align=center]above:\small{$2$}}] {};
\node (4) at (4,0) [gauge,label={[align=center]above:\small{$2$}}] {};
\draw[double distance=1.5pt,scaling nfold=3,decoration={markings, mark=at position 0.65 with {\arrow[scale=0.7]{>}}}, postaction={decorate}] (0) to (1);
\draw (1)--(2)--(3);
\draw[double distance=1.5pt,scaling nfold=2,decoration={markings, mark=at position 0.65 with {\arrow[scale=0.7]{>}}}, postaction={decorate}] (4) to (3);
\end{tikzpicture}} \right) \, . 
\end{equation}

\section{\texorpdfstring{Generalization to $(p,q)$-edges}{Generalization to pq-edges}}
\label{sec:pq_edge}
Let us relax assumption (iii) in Definition~\ref{def:quiver} to allow for quivers with $(p,q)$-edges.
\begin{definition} \label{def:quiver_p-q_edge}
A \textbf{generalized quiver} $Q$ is a triple $(V,A,K)$ where $V$ is a finite set, $A$ a function $V\times V\to \mathbb{Z}$ and $K$ a function $V\to \mathbb Z_{>0}$, such that
\begin{assertions}
	\item for all $x \in V$, $A(x,x) = -2+2g_x$, with $g_x \in \mathbb{Z}_{\geq 0}$. If $K(x) = 1$, then $A(x,x)=-2$.
    \item there exists a function $L: V\to\mathbb Z_{>0}$ such that for every $x,y\in V$, $A(x,y) L(y)= A(y,x)L(x) $.
\end{assertions}
\end{definition}

\begin{rmq}
    A $(p,q)$-edge is the extension to the case that if $A(x,y)>0$ for some $x,y\in V$, $A(x,y)$ does not need to be a divisor of $A(y,x)$, and vice versa. Such an edge is called a $(p,q)$-edge with $p=A(x,y)$ and $q=A(y,x)$ \cite{Grimminger:2024doq}.  The Cartan matrix $A$ is symmetrizable, due to the length function.
\end{rmq}
Based on analysis of the abelian generalized quivers in \cite{Bourget:2024asp}, it is natural to propose the following:
\begin{conjecture}[Decay and Fission for generalized quivers]\label{conj:Decay_Fission_p-q}
    The Decay and Fission algorithm holds true for quasi-good quivers of Definition~\ref{def:quiver_p-q_edge}.
\end{conjecture}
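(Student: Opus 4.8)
The plan is to mirror the structure of the simply-laced argument (Theorem~\ref{thm:simply_laced_decay}): reduce the general problem to a short list of special configurations by repeatedly passing to decay products, and then settle each configuration using the convexity that the ``good'' condition \eqref{eq:good} imposes along chains of weights. First I would reduce to the connected case, since each connected component is a decay product of $Q$. Next I would invoke the Abelian classification (Theorem~\ref{thm:list_stable_abelian_quivers}): if $Q$ is Abelian, or merely contains a non-free Abelian subquiver, a suitable decay product already lies in the list of Proposition~\ref{prop:list_stable_non_abelian_non_simply_laced_quivers}, so I may assume $Q$ is non-Abelian with all Abelian subquivers free. Following \cite{Bourget:2024asp}, this freeness is precisely what rules out cycles and forces the lacednesses of any two opposing non-simply-laced edges to be coprime.

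Then I would apply two normalization moves. Using the identification $\mathcal{M}_C(Q)=\mathcal{M}_C(Q')$ established in Section~\ref{subsec:trivial_edges}, I replace every non-simply-laced edge sourced at a rank-$1$ node by a simply-laced one; a decay of the modified quiver lifts to a decay of $Q$ by re-decorating these trivial edges, so I may assume there are none. Exactly as in the simply-laced proof, I also pass to a decay product in which every weight-$1$ node has exactly one neighbor, by extracting the subquiver spanned by a non-Abelian connected component together with its rank-$1$ neighbors (deleting one rank-$1$ vertex if this would otherwise yield an over-extended $N\cdot X_n^{(r)}$), and checking the result is good and hence a genuine decay product.

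With these reductions, I would run the case analysis on the non-simply-laced edges. An edge of lacedness $\geq 4$ yields a $\mathcal{Y}(\ell)$ decay, so all lacednesses are $2$ or $3$. A triple edge whose source has $\geq 2$ neighbors yields $G_2^{(1)}$, $\mathcal{J}_{2,3}$ or $\mathcal{J}_{3,3}$, so every triple edge is sourced at a degree-$1$ node. If two or more non-simply-laced edges survive, examining a pair joined by a simply-laced segment produces a $\tilde{B}_n^{(2)}$, $B_n^{(2)}$, $gc_n$ or $gb_n$ decay; if none survive, Theorem~\ref{thm:simply_laced_decay} applies. Hence I may assume exactly one non-simply-laced edge. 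A vertex of degree $\geq 3$ then branches off the high-valence node to give a $C_n^{(2)}$, $B_n^{(1)}$ or $gd_n$ decay, leaving only the linear case.

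The heart of the argument, and the step I expect to be the main obstacle, is this final linear case, where the coupled balance inequalities across the single non-simply-laced edge must be analyzed. For a triple edge the conditions $2a\leq 3b_0$ and $2b_0\leq a+b_1$, combined with strict convexity of the short-node tail, force $a\geq 3$, $b_0\geq 2$, $b_1\geq 1$ and hence a $D_4^{(3)}$ decay. For a double edge the two-way coupling $2a_0\leq 2b_0+a_1$ and $2b_0\leq a_0+b_1$ is more delicate: convexity gives $a_n\leq a_0-n$ and $b_n\leq b_0-n$, which first pins down $a_0\geq 3$, $b_0\geq 2$, then $a_1\geq 2$, $b_1\geq 1$, after which one must split on the tails. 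The subtle point is ruling out $b_2=0$, since then $2b_1\leq b_0$ together with the crossing inequalities gives $\tfrac34 a_0\leq b_0\leq \tfrac23 a_0$, a contradiction; thus $b_2\geq 1$ and $Q$ decays to $F_4^{(2)}$ (or to $F_4^{(1)}$ when $a_2\geq 1$). Making these estimates tight enough to land on precisely the listed quivers, while verifying that each extracted subconfiguration is genuinely good so that it is indeed a decay product, is where the real care is required.
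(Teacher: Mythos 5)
There is a fundamental mismatch here: the statement you were asked to address is a \emph{conjecture}, not a theorem, and the paper offers no proof of it --- nor could your argument supply one. Conjecture~\ref{conj:Decay_Fission_p-q} asserts that the Decay and Fission algorithm itself (i.e.\ the 1-to-1 correspondence of Conjecture~\ref{conj:Decay-Fission} between the poset of symplectic leaves of $\mathcal{M}_C(Q)$ and the poset $(\mathcal{L},\succcurlyeq)$ of decay and fission products) remains valid when assumption (ii) of Definition~\ref{def:quiver} is relaxed to allow $(p,q)$-edges as in Definition~\ref{def:quiver_p-q_edge}. The paper's only support for this is the remark that it is ``natural to propose'' based on the analysis of Abelian generalized quivers in \cite{Bourget:2024asp}. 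What you have written is instead a classification argument --- a reduction-and-convexity analysis showing that every good quiver admits a decay product from a fixed list. That is, you have essentially reproduced the paper's proof of Proposition~\ref{thm:non-simply_laced}, which is a purely combinatorial statement about the decay relation $\rightsquigarrow$, and which \emph{presupposes} the conjecture when one wants to draw geometric conclusions. No amount of manipulating the balance inequalities \eqref{eq:good} can establish the conjecture, because the conjecture is a geometric claim: it requires relating the combinatorics of fission products to the actual stratification of the Coulomb branch into symplectic leaves (e.g.\ via monopole-formula or transverse-slice arguments), which your proposal never touches.

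A secondary but telling symptom: your case analysis only treats edges by their lacedness $\ell\in\{2,3,\ldots\}$, i.e.\ the old setting where one of $A(x,y),A(y,x)$ divides the other. The entire point of Definition~\ref{def:quiver_p-q_edge} is to admit genuine $(p,q)$-edges such as the $(3,2)$-edge, which is why the paper's final theorem in Section~\ref{sec:pq_edge} (the statement your argument most closely resembles, conditional on the conjecture) must enlarge the target list by $gc_1$. Your proposal neither produces $gc_1$ nor handles a $(p,q)$-edge with coprime $p,q\geq 2$, so even read as a proof of the classification statement for generalized quivers, it has a gap precisely at the new configurations.
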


\begin{ftheo}
    Let $Q$ be a generalized quiver, then it admits a decay product isomorphic to a quiver listed in Tables~\ref{tab:results} and \ref{tab:dynkin}, if one allows for $gc_1$.
\end{ftheo}
\begin{proof}
   The proof requires an extension of the proof of Theorem~\ref{thm:non-simply_laced}, assuming Conjecture~\ref{conj:Decay_Fission_p-q} holds. 

First, the $A_{N-1}$ quiver is generalized to $(p,q)$-edges with multiplicity $N$
\begin{align}
    \begin{tikzpicture}[baseline=0,xscale=1.5]
        \node (0) at (0,0) [gauge,label={[align=center]below:\small{$1$}}] {};
        \node (1) at (1,0) [gauge,label={[align=center]below:\small{$1$}}] {};
        \draw[line width=2pt,darkgray, decoration={markings, mark=at position 0.2 with {\arrow[scale=0.7]{>}}, mark=at position 0.9 with {\arrow[scale=0.7]{<}}}, postaction={decorate},transform canvas={xshift=0pt,yshift=4.75pt}] (0) to node[midway, above, black] {\small{$(p,q)$}} (1);
        \draw[line width=2pt,darkgray, decoration={markings, mark=at position 0.2 with {\arrow[scale=0.7]{>}}, mark=at position 0.9 with {\arrow[scale=0.7]{<}}}, postaction={decorate},transform canvas={xshift=0pt,yshift=-4.75pt}] (0) to node[midway, above, black] {} (1);
        \node at (0.5,0.095) [label={[align=center]right:\small{}}] {\footnotesize{${\scriptstyle{\vdots\,}} N$}};
        \end{tikzpicture} \label{eq:pq_with_multiplicity}
\end{align}
whose Coulomb branch is $A_{N\cdot \gcd(p,q)-1}$. Thus, \eqref{eq:pq_with_multiplicity} is the only stable quiver with edge-multiplicity larger than $1$. 
Second, we realize that the $\mathcal{Y}(\ell)$ quiver is naturally extended to the $(p=\ell^\prime,q=\ell)$-edges cases,
\begin{align}
\begin{tikzpicture}[baseline=0]
         \tikzset{node distance=1cm};
            \node (1a) [gauge,label={[align=center]below:\small{$1$}}] {};
            \node (1b) [gauge,right of=1a,label={[align=center]below:\small{$2$}}] {};
            \draw[line width=2pt,darkgray, decoration={markings, mark=at position 0.65 with {\arrow[scale=0.7]{>}}}, postaction={decorate}] (1b)to node[midway,above,black] {\small{$\ell$}} (1a);
            \end{tikzpicture}
            \qquad \leftrightarrow\qquad
        \begin{tikzpicture}[baseline=0,xscale=1.5]
        \node (0) at (0,0) [gauge,label={[align=center]below:\small{$1$}}] {};
        \node (1) at (1,0) [gauge,label={[align=center]below:\small{$2$}}] {};
        \draw[line width=2pt,darkgray, decoration={markings, mark=at position 0.2 with {\arrow[scale=0.7]{>}}, mark=at position 0.9 with {\arrow[scale=0.7]{<}}}, postaction={decorate}] (0) to node[midway, above, black] {\small{$(\ell^\prime,\ell)$}} (1);
        \end{tikzpicture}
\end{align}
analogous to Section~\ref{subsec:trivial_edges}. Therefore, we can restrict to $(p,q)$-edges between non-Abelian vertices with $p,q<4$. 

Next, by the linear Abelian quivers $h_{n=1,\delta,\sigma}$ (see Table~\ref{tab:results}), we can restrict to co-prime $p,q$, as otherwise a decay to an Abelian quiver exists. 
Thus, without loss of generality, we are led to a $(p,q)$-edge with $p=3$ and $q=2$. This is precisely $gc_1$, for which one computes
\begin{align}
    gc_1:  \begin{tikzpicture}[baseline=0,xscale=1.5]
        \node (0) at (0,0) [gauge,label={[align=center]below:\small{$2$}}] {};
        \node (1) at (1,0) [gauge,label={[align=center]below:\small{$2$}}] {};
        \draw[line width=2pt,darkgray, decoration={markings, mark=at position 0.2 with {\arrow[scale=0.7]{>}}, mark=at position 0.9 with {\arrow[scale=0.7]{<}}}, postaction={decorate}] (0) to node[midway, above, black] {\small{$(3,2)$}} (1);
        \end{tikzpicture}
        \;,\qquad 
        \HS_{gc_1}(t)=
    \frac{1+6t^2+29t^4+30t^6+29t^8+6t^{10}+t^{12}}{(1-t^2)^3(1-t^4)^3} \,,
\end{align}
which is compatible with the $n=1$ limit of $\HWG_{gc_n}$ of Table~\ref{tab:HWG}.
This concludes the proof.
\end{proof}

\appendix

\section{Hilbert Series and Highest Weight Generating Functions}
\label{app:HS_and_HWG}

In general, the $\mathbb C^*$-action on a CSS induces a grading on the coordinate ring, and the dimensions of the subspaces of fixed degree are finite. The \textbf{Hilbert series} is defined as their generating function.  

\begin{definition} \label{def:HS}
    For a quiver $Q=\left(V,A,K\right)$ the \textbf{conformal dimension} is defined as the function  
    \begin{equation}
        \begin{array}{crcl}
           \Delta_{Q}: & \Lambda\equiv\prod_{x\in V}\mathbb{Z}^{K(x)}  & \rightarrow & \mathbb{R}  \\ & (\vec{m}_x)_{x \in V} & \mapsto & \frac{1}{2} \sum\limits_{x,y \in V}  \mathrm{sgn} (A(x,y)) \delta ( \vec{m}_x A(x,y)  ,  \vec{m}_y A(y,x) ) \\
             & 
        \end{array}
    \end{equation}
where $\delta (\vec{u} , \vec{v}) = \sum\limits \notag|u_i - v_j|$. 
From the length function $L$ we define $\textbf{L} \in \prod_{x\in V}\mathbb{Z}^{K(x)}$ as $\textbf{L}|_{\mathbb{Z}^{K(x)}} = L(x) \cdot (1,\ldots,1)$ where $(1,\ldots,1) \in \mathbb{Z}^{K(x)}$, such that $\Delta_{Q}(\textbf{m}+\textbf{L})=\Delta_{Q}(\textbf{m})$.

Let $W = \prod_{x \in V} \mathfrak{S}_{K (x)}$ where $\mathfrak{S}_n$ is the symmetric group on $n$ objects. 
The \textbf{Hilbert series} of the Coulomb branch is given \cite{Nakajima:2019olw}, if the series converges, by the \emph{Monopole Formula} \cite{Cremonesi:2013lqa,Bourget:2020bxh}
\begin{equation}
\HS_{Q} (t) = \frac{1-t^2}{|W|} \sum\limits_{\mathbf{m} \in \Lambda / \mathbf{L} \mathbb{Z}} \; \sum\limits_{w \in W (\mathbf{m})} \frac{t^{\Delta_Q (\mathbf{m})}}{\det (1 - w t^2)} \, . 
\end{equation}
where $w$ is seen as a permutation matrix and $W(\mathbf{m}) = \{ w \in W | w \cdot \mathbf{m} = \mathbf{m}\}$. 
\end{definition}

% \begin{rmq}
%     There is a list of quivers  for which the monopole formula does not converge, but where the Coulomb branch is a CSS with a well-defined Hilbert series. They are given in (ii) of Definition~\ref{def:good}.
% \end{rmq}

\begin{rmq} When the Coulomb branch has a symmetry algebra $\mathfrak{g}$, one can \emph{refine} the Hilbert series, which becomes a power series in $t$ with coefficients which are characters of $\mathfrak{g}$, written as Laurent polynomials in $\mathrm{rank}(\mathfrak{g})$ variables which we denote as $z_1 , \dots , z_{\mathrm{rank}(\mathfrak{g})}$. Then the refined Hilbert series can be written in a unique way as
\begin{equation}
\label{eq:refinedHS}
\HS_{Q}(t ; z_1 , \dots , z_r ) = \sum\limits_{n \in \mathbb{N}} \sum\limits_{n_1 , \dots , n_r \in \mathbb{N}} a_{n,n_1 , \dots n_r} \chi_{[n_1 , \dots , n_r]} (z_1 , \dots , z_r ) t^n \, ,  
\end{equation}
where $\chi_{[n_1 , \dots , n_r]} (z_1 , \dots , z_r )$ is the character for the representation of $\mathfrak{g}$ with highest weight specified by Dynkin labels $[n_1 , \dots , n_r]$. For the ordering of the labels, we choose conventions such that for $\mathfrak{g}= \mathfrak{su}_n$, $n \geq 3$, $[1,0,\dots,0]$ corresponds to the fundamental representation and $[1,0,\dots,0,1]$ to the adjoint representation.  For $\mathfrak{g}=\mathfrak{so}_n$, $[1,0,\dots,0]$ corresponds to the fundamental representation and $[0,1,\dots,0]$ to the adjoint representation for $n \geq 7$. For $\mathfrak{so}_6$, the adjoint is $[0,1,1]$, for $\mathfrak{so}_5$, the adjoint is $[0,2]$.   
\end{rmq}

\begin{definition}\label{def:HWG}
    The \textbf{Highest Weight Generating} (HWG) function \cite{Hanany:2014dia} associated to \eqref{eq:refinedHS} is defined to be $\HWG_{Q} \in \mathbb{C}[[t , \mu_1 , \dots , \mu_r]]$ with 
\begin{equation}
\HWG_{Q} = \sum\limits_{n \in \mathbb{N}} \sum\limits_{n_1 , \dots , n_r \in \mathbb{N}} a_{n,n_1 , \dots n_r}  t^n \mu_1^{n_1} \cdots \mu_r^{n_r} \, . 
\end{equation}
\end{definition} 

The plethystic logarithm, and its inverse the plethystic exponential, are natural operations when studying generating functions.

\begin{definition}
\label{def:PL}
    For $f \in \mathbb{C}[[t_1 , \dots , t_k]]$ such that $f(0,\dots,0)=1$, 
    the \textbf{Plethystic Logarithm} (PL) is defined to be
    \begin{equation}
    \PL[f](t_1 , \dots , t_k) = \sum\limits_{j=1}^{\infty} \frac{\mu (j)}{j} \log (f(t_1^j \dots , t_k^j)) \, . 
    \end{equation}
    Here, $\mu$ is the Möbius multiplicative function. 
\end{definition}

For some ICSSs, such as the minimal nilpotent orbits or the Coulomb branches of the $gb_n,gc_n$ and $gd_n$ quivers, the plethystic logarithm of the HWG turns out to be a simple polynomial.

\bibliographystyle{JHEP}     
\bibliography{references}

\end{document}